\newcommand\p{\mathbf{p}}
\newcommand\q{\mathbf{q}}
\newcommand\te{\mathbf{t}}
\newcommand\es{\mathbf{s}}
\newcommand\er{\mathbf{r}}
\newcommand\R{\mathbf{R}}
\newcommand\conc{\mathcal{\circ}}
\newcommand{\lift}{}
\newcommand{\creas}{c}
\newcommand\zmax{z_\text{max}}
\newcommand\fw{face-weight\xspace}
\newcommand\fws{face-weights\xspace}
\newtheorem{lem}{Lemma}
\newtheorem{defn}{Definition}
\newtheorem{thm}{Theorem}
\newtheorem{cor}{Corollary}
\author{Erik D. Demaine\thanks{Computer Science and Artificial Intelligence
Laboratory, MIT, Cambridge, MA. {\tt edemaine@mit.edu}}
\and
Andr\'e Schulz\thanks{LG Theoretische Informatik, FernUniversit\"at in Hagen, {\tt
      andre.schulz@fernuni-hagen.de}. Supported by
the German Research Foundation (DFG) under grant SCHU 2458/2-1. }}
\title{Embedding Stacked Polytopes on a Polynomial-Size Grid}
\begin{document}

 \maketitle

\begin{abstract}
A stacking operation adds a $d$-simplex on top of a facet of a simplicial $d$-polytope while 
maintaining the convexity of the polytope. A stacked $d$-polytope is a polytope that
is obtained from a $d$-simplex and a series of stacking operations. We show that
for a fixed $d$ every stacked $d$-polytope with $n$ vertices can be realized with nonnegative 
integer coordinates. The coordinates are bounded by $O(n^{2\log_2(2d)})$, except
for one axis, where the coordinates are bounded by $O(n^{3\log_2(2d)})$. The
described realization can be computed with an easy algorithm.

The realization of the polytopes is obtained with a lifting technique which produces
an embedding on a large grid. We establish 
a rounding scheme that places the vertices on a sparser grid, while maintaining the
convexity of the embedding.
\end{abstract}
%\linenumbers
\section{Introduction}

Steinitz's Theorem~\cite{S22,Z95} states that the graphs of 3-polytopes%
\footnote{In our terminology a \emph{polytope} is always understood as a \emph{convex} polytope.}
are exactly the planar 3-connected graphs.
In particular, every planar 3-connected graph can be realized as a 3-polytope.
The original proof is constructive, transforming the graph by a sequence of
local operations down to a tetrahedron.  Unfortunately, the resulting
polytope construction requires exponentially many bits of accuracy for each
vertex coordinate.  Stated another way, this construction can place the
$n$ vertices on an integer grid, but that grid may have dimensions
doubly exponential in~$n$~\cite{OS94}. The situation in higher dimensions is even worse.
Already in dimension $4$, there are polytopes that cannot be realized
with rational coordinates, and a 4-polytope that can be realized
on the grid might require coordinates that are doubly exponential in
the number of its vertices \cite{Z95}.
Moreover, it is $\sf NP$-hard to decide whether a lattice is a face lattice
of a 4-polytope~\cite{RG96,GZ95}.

How large an integer grid do we need to embed a given planar 3-connected graph
with $n$ vertices as a polytope?  This question goes back at least eighteen years
as Problem~4.16 in G\"unter M. Ziegler's book \cite{Z95};
he wrote that ``it is quite possible that there is a quadratic upper bound''
on the length of the maximum dimension.
The best bound so far is exponential in $n$, namely $O(2^{7.21n})$
\cite{BS10,RRS11}; see below for the long history.
The central question is whether a polynomial grid suffices, that is,
whether Steinitz's Theorem can be made efficient.
For comparison, a planar graph can be embedded in the plane with strictly
convex faces using a polynomial-size grid \cite{BR06}.
In this paper, we give the first nontrivial subexponential upper bound
for a large class of 3-polytopes. Moreover, our construction generalizes to higher
dimensions and we show that a nontrivial class of $d$-polytopes can 
be realized with integers coordinates, which are bounded by a polynomial
in $n$.

%\paragraph{Stacked polytopes.}
%\begin{figure}
%  \label{fig:easy}
%  \begin{center}
%    \includegraphics[width=0.6\columnwidth]{easy}
%  \end{center}
%  \vspace{-20pt}
%  \caption{A stacked polytope obtained by two stacking operations.}
%\end{figure}
A \emph{$d$-dimensional stacked polytope} is a polytope that is constructed by a sequence
of ``stacking operations'' applied to a $d$-simplex. 
 A \emph{stacking operation}
glues a $d$-simplex $\Delta$ atop a simplicial facet $f$ of  polytope, by identifying
$f$ with a face of $\Delta$, while maintaining the convexity of the
polytope.  Thus a stacking operation removes one facet $f$ and adds $d$ new
facets having a new common vertex.
We call this new vertex \emph{stacked on $f$}.
%
%\xxx[Erik]{Maybe mention dual of stacking = truncation
%  (and connection to LP view of polytopes).
%  Maybe move Figure 1 here to have an example.
%  Or add figure of a single 3D stacking operation,
%  e.g., tetrahedron to triangular dipyramid?}
%
%Stacked polytopes are perhaps the easiest-to-study nontrivial class of
%3D polytopes.
%Stacked 3-polytopes  seem a natural class to study, because
%the stacking operation is a special case (perhaps the simplest)
%of the operations in Steinitz's proof.
%%\xxx[Andre]{Actually the situation is a little bit more complex than I told you last time. Only one subclass of the $Y\Delta$ operations refers to reversed stacking operation. But I would say we can leave the last statement the way it is.}
%%\xxx[Erik]{Agreed, I think it's clear enough.}
%Thus our solution for stacked 3-polytopes has potential to be generalized to
%general 3-polytopes.
%The graphs of stacked $d$-polytopes are $d$-trees, that is,
%maximal graphs of treewidth~$d$.

\paragraph{Our results.}
We present an algorithm that realizes a stacked polytope on a grid whose
dimensions are polynomial in~$n$.  In our presentation $\log$ denotes the binary logarithm.
Our main result is the following:

\newcommand{\mainthm}{
\begin{thm}\label{thm:main}
Every $d$-dimensional stacked polytope can be realized on an integer grid,
such that all coordinates 
have size at most $10 d^2  R^2$, except for one axis, where the coordinates have size at most $6 R^3$, for
$R= d n^{\log(2d)}$.
\end{thm}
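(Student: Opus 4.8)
The plan is to realize the polytope as a convex lifting over a subdivided $(d-1)$-simplex, and then to discretize both the floor plan and the heights onto a coarse integer grid without destroying convexity. First I would put the polytope into a canonical shape. Fix one facet $\phi$ of the combinatorial stacked $d$-polytope $P$. I claim $P$ can be realized with $\phi$ in the hyperplane $\{z=0\}$, with $P\subset\{z\le 0\}$, and with the vertical projection of $\partial P\setminus\phi$ onto $\{z=0\}$ equal to a triangulation $\mathcal S$ of the simplex $\phi$ that is a \emph{stacked subdivision}: $\mathcal S$ arises from $\phi$ by repeatedly starring a cell at an interior point, and its dual tree is a re-rooting of the stacking tree of $P$. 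This follows by induction on the number of stackings: remove an appropriate leaf simplex of the stacking tree whose apex avoids $\phi$, realize the smaller polytope in canonical shape by induction, and re-insert the leaf by placing its new vertex strictly below the supporting hyperplane of the facet it is stacked on but still above $\{z=0\}$; local convexity at the newly created ridges is then immediate, and in the projection one has starred exactly one cell of $\mathcal S$.

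The problem is now reduced to choosing integer points in $\mathbb R^{d-1}$ for the vertices of $\mathcal S$ and integer heights so that the lower boundary is a convex piecewise-linear function. I would construct $\mathcal S$ by a \emph{size-weighted starring}: when a cell $\sigma=\mathrm{conv}(p_0,\dots,p_{d-1})$ is starred, its new vertex gets barycentric coordinates proportional to the sizes of the $d$ subtrees that will grow inside the $d$ resulting cells (clearing denominators by one global scaling at the end). This keeps every cell \emph{fat}: its inradius is a controlled fraction of its diameter, and the cell holding a vertex $v$ has diameter at least a fixed fraction of $(\mathrm{size}(v)/n)$ times the diameter of $\phi$. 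One then realizes the construction by divide and conquer: split the stacking tree along a balanced edge, realize each side recursively, and nest one sub-realization inside the corresponding cell of the other by an affine map; because that cell is fat, the map has bounded distortion and the nesting costs only a factor $\Theta(d)$ in grid size. Hence the scaling $S(n)$ needed to make the floor plan integral obeys $S(n)=\Theta(d)\,S(n/2)$, so $S(n)=O(n^{\log(2d)})=O(R)$, and writing out the barycentric denominators and the affine distortions gives the stated bound $10d^2R^2$ on the horizontal coordinates.

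Finally — and this is where I expect the main obstacle — I must pick the heights and show they survive rounding. I would give each new vertex the value of the current supporting plane over its projection, lowered by a drop $\delta$; convexity of the lower boundary is equivalent to positivity of one determinant (a local-convexity inequality) at each interior ridge, and the whole point is to keep these slacks larger than the rounding error. By induction down the stacking tree I would show that after the size-weighted placement every such slack is bounded below by a quantity that loses only a bounded factor per scale, so that after rescaling all heights by a further factor $O(R)$ — the extra $R$ accounting for the drops being as small as the reciprocal of a fatness quantity, for a total of $O(R^3)$ on the $z$-axis — and then rounding every height to the nearest integer, all local-convexity inequalities still hold strictly. The delicate case is the \emph{interface ridges}: a ridge of a cell $\sigma$ of one sub-realization becomes, after nesting, a ridge between a boundary facet of the inner piece and a facet of the outer piece, so the two incident slopes come from different recursion levels, and it is precisely the fatness of $\sigma$ that bounds the resulting slack from below. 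Collecting constants then yields coordinates of size at most $10d^2R^2$ off the $z$-axis and at most $6R^3$ on it with $R=n^{\log(2d)}$; and since the re-rooting, the weighted starring, the choice of drops and the rounding are all manifestly polynomial-time, the realization is computed by an easy algorithm.
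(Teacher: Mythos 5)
Your overall template (flat barycentric subdivision, convex lifting, perturb-then-round) matches the paper, but two load-bearing claims in your grid-size analysis are not established, and I believe at least one of them is false as stated.

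\textbf{Size-proportional starring does not keep cells fat, and it compounds denominators.}
You star each cell with barycentric coordinates proportional to the subtree sizes and assert ``this keeps every cell fat.'' Consider the serpentine tree (each internal node has one heavy child of size about $n-d$ and $d-1$ leaf children). Proportional starring then places the new vertex $q$ with barycentric coordinate $\approx 1-\tfrac{d-1}{n}$ at some vertex $p_1$; the cells not opposite $p_1$ are slivers whose inradius-to-diameter ratio is $\Theta(1/n)$, not bounded. Fatness is precisely what fails here. A second, independent problem: the barycentric coordinates at each starring have denominator equal to the parent cell's size minus one, and these denominators \emph{compound multiplicatively} along every root--leaf chain. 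Clearing denominators ``by one global scaling'' therefore can require a scaling factor that is a product of $\Theta(n)$ terms, i.e.\ super-polynomial, not $O(R)$. The paper's heavy-path balancing (Lemma~\ref{lem:balance}) is introduced exactly to sidestep this: by forcing \emph{equal} weights on all light children of each node (and then boosting the heavy child to dominate), the root's face-weight is kept at $n^{\log 2d}$ and the flat embedding fits on a single, globally polynomial grid without any denominator-compounding.

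\textbf{The divide-and-conquer nesting does not obviously produce integers.}
Even granting fatness, the affine map $\phi$ that embeds an integer realization of the inner sub-polytope into a cell $\sigma$ of the outer integer realization is in general not an integral affine map. If the inner piece lives on the grid $\{0,\dots,S(n/2)\}^{d-1}$ and $\sigma$ has arbitrary integer vertices, then to make $\phi$ integer-to-integer you must pre-scale the outer picture so its cell coordinates are divisible by $S(n/2)$, which gives a recurrence like $S(n)\ge S(n/2)^{2}$ rather than $S(n)\le \Theta(d)\,S(n/2)$, and that solves to something superpolynomial. The recursion $S(n)=\Theta(d)\,S(n/2)$ is asserted but the mechanism that keeps the nesting integral at constant cost is missing. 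Relatedly, you flag the interface-ridge slack bound as ``the main obstacle'' but do not resolve it; in the paper this is the content of Lemmas~\ref{lem:stressvalues}, \ref{lem:stressperturbation}, and \ref{lem:zperturbation}, whose proofs depend crucially on the vertical shift being defined as a \emph{product of two balanced face-weights} ($\zeta_i=A_iB_i$), a structure your proportional assignment does not provide.

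In short, the two devices the paper needs to get a polynomial bound --- heavy-path balancing of face-weights and the specific $\zeta_i=A_iB_i$ choice that makes the stress telescoping bound work --- are replaced in your sketch by a fatness claim and a nesting recursion, and neither replacement is justified; the fatness claim is refuted by the serpentine case and the recursion does not control denominators.
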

}
\mainthm
As a corollary of Theorem~\ref{thm:main} we obtain that every stacked 3-polytope can be embedded
on the grid of dimensions $270 n^{5.17}\times 270 n^{5.17} \times 18 n^{7.76}$.

\paragraph{Related work.}

Several algorithms have been developed to realize a given graph as a
3-polytope.
Most of these algorithms are based on the following two-stage approach.
The first stage computes a plane (\emph{flat}) embedding.  To extend the plane drawing
to a 3-polytope, the plane drawing must fulfill a criterion which can
be phrased as an ``equilibrium stress condition''.  Roughly speaking,
replacing every edge of the graph with a spring, the resulting system of
springs must be in a stable state for the plane embedding.
Plane drawings that fulfill this criterion for the interior vertices
can be computed as barycentric embeddings, i.e.,
by Tutte's method~\cite{T60,T63}.
The main difficulty is to guarantee the equilibrium condition for the boundary
vertices as well, because in general this goal is achievable only for certain
locations of the outer face.
The second stage computes a 3-polytope by assigning every vertex a height
expressed in terms of the spring constants of the system of springs.

The two-stage approach finds application in a
series of algorithms~\cite{CGT96,EG95,HK92,OS94,RRS11,RG96,Sch11}. The first result that improves the induced grid embedding of Steinitz's construction is due to Onn and Sturmfels~\cite{OS94}; they achieved a grid size of $O(n^{160n^3})$. 
Richter-Gebert's algorithm \cite{RG96} uses a grid of size  $O(2^{18 n^2})$ for general 3-polytopes, and a grid of size $O(2^{5.43n})$ if the graph of the polytope contains
at least one triangle.
These bounds were improved by Rib\'o~Mor~\cite{R06} and later on  by
Rib\'o~Mor, Rote, and Schulz~\cite{RRS11}.
The last paper expresses an upper bound for the grid size
in terms of the number of spanning trees of the graph.
Using the recent bounds of Buchin and Schulz~\cite{BS10} on the number of
spanning trees, this approach gives an upper bound on the grid size of
$O(2^{7.21n})$ for general 3-polytopes and $O(2^{4.83n})$ for 3-polytopes
with at least one triangular face.
These bounds are the best known to date for the general case. Very recently, Pak and Wilson proved
that every simplicial\footnote{A polytope is \emph{simplicial} if all its faces are simplices.}
 3-polytope can be embedded on a grid of size $4n^3 \times 8n^5 \times (500n^8)^n$~\cite{PW13}.

% Zickfelds Arbeit

Zickfeld showed in his PhD thesis~\cite{Z09} that it is possible to embed
very special cases of stacked 3-polytopes on a grid polynomial in $n$.
First, if each stacking operation takes place on one of the three
faces that were just created by the previous stacking operation
(what might be called a \emph{serpentine stacked polytope}), then there is an embedding on the
$n \times n \times 3n^4$ grid.
Second, if we perform the stacking in rounds, and in every round we
stack on every face simultaneously (what might be called the \emph{balanced} stacked polytope),
then there is an embedding on a
${4 \over 3} n \times {4 \over 3} n \times O(n^{2.47})$ grid.
%These two types of polytopes are essentially the two extreme cases of
%stacked polytopes, where the dual graph of the tetrahedra is either a path
%or a balanced binary tree.  
%From our experience, the most difficult cases
%are polytopes where these two kinds of ``stacking strategies'' interleave. In particular, 
Zickfeld's embedding algorithm for balanced stacked 3-polytopes constructs a barycentric embedding. Because of the
special structure of the underlying graph the plane embedding remarkably fits on a small grid.
% Such a tree can be a balanced binary tree that is turned into a ternary tree by adding a child to every interior node.

Every stacked polytope can be extended to a balanced stacked polytope
at the expense of adding an exponential number of vertices.
By doing so, Zickfeld's grid embedding for the balanced case induces a
$O(2^{3.91n})$ grid embedding for general stacked 3-polytopes.
%From our experience, the most difficult-to-embed 
%stacked polytopes are ``almost balanced''. To construct such a polytope, we perform the stacking operations in rounds. In each round, we stack a vertex on two of the three faces that share a vertex that was introduced in the previous round.

% Was ist in 2D bekannt, untere Schranken
Little is known about the lower bound of the grid size for embeddings of 3-polytopes. 
An integral convex embedding of an $n$-gon in the plane needs
an area of $\Omega(n^{3})$~\cite{AZ95,A61,T91}.
Therefore, realizing a 3-polytope with an $(n-1)$-gonal face requires
at least one dimension of size $\Omega(n^{3/2})$. 
For simplicial polytopes (and hence for stacked polytopes), this lower-bound argument does not apply.
However, there are planar 3-trees that require  $\Omega(n^2)$ area for plane straight-line drawings~\cite{MNRA11}. We show in Sect.~\ref{sec:lower} how to get a lower bound of $\Omega(n^3)$ volume for stacked 3-polytopes based on this 2d example.

%\xxx[Andre]{Is the 2d stuff relevant?}
%\xxx[Erik]{I guess it's not necessary; it's covered enough above.}
%Grid embeddings of planar 3-connected graphs on the 2D grid are well
%understood. There are at least two different algorithm that realize $G$ on an
%$O(n)\times O(n)$ grid~\cite{FPP90,S90}. If every face should be realized as
%convex polygon the $O(n)\times O(n)$ bound holds as well~\cite{BFM07}. If we
%require further that every face of the embedding is a strictly convex polygon,
%an embedding can be found on the $O(n^{3/2})\times O(n^{3/2})$ grid~\cite{BR06}.

%Alternative Ansaetze: Steinitz, Circle Packing
Alternative approaches for realizing general 3-polytopes come from the
original proof of Steinitz's theorem, as well as the Koebe--Andreev--Thurston
circle-packing theorem, which induces a particular polytope realization
called the \emph{canonical polytope} \cite{Z95}.
%In order to show that every 3-connected graph is the graph of a 3D polytope, the graph is transformed by a sequence of local modifications to the graph of a tetrahedron. Carrying these operations out in $\mathbf{Z}^3$ yields a super-polynomial grid embedding.
Das and Goodrich \cite{DG97} essentially perform many inverted edge contractions
on many independent vertices in one step,
resulting in a singly exponential bound on the grid size.
The proof of the Koebe--Andreev--Thurston circle-packing theorem relies on
nonlinear methods and makes the features of the 3-dimensional embedding obtained
from a circle packing intractable; see \cite{S73} for an overview.
Lov\'asz studied a method for realizing 3-polytopes using a vector
of the nullspace of a \emph{Colin de Verdi\`ere matrix} of rank~$3$~\cite{L00}.
It is easy to construct these matrices for stacked polytopes;
however, without additional requirements,
the computed grid embedding might again need an exponential-size grid.

\paragraph{Our contribution.}
At a high level, we follow the popular two-stage approach: we compute a \emph{flat embedding} in $\R^{d-1}$ 
and then lift it to a $d$-polytope.
Although this two-stage approach is well known for realizing 3-polytopes, it cannot be easily 
extended to higher dimensions. There exists a generalization of equilibrium stresses
for polyhedral complexes in higher dimensions by Rybnikov~\cite{R99}. However, it is not 
straightforward to operate with the formalism used by Rybnikov for our purposes.
To overcome the difficulties of handling the more complex behavior of 
the higher-dimensional polytopes and to keep our presentation self-contained, we develop our
own specialized  methods to study liftings of \emph{stacked} polytopes. Notice that our methods coincides 
with the approach of Rybnikov, however, we omit the proof, since it is not required for this presentation.

Instead of specifying
the stress and then computing the barycentric embedding we construct the ``stress''
and the barycentric embedding in parallel. To specify the stress we define the heights of
the lifting (actually, the vertical movement of the vertices as induced by the stacking operation). 
In our presentation the concept of stress is therefore reduced to a certificate for the convexity
of the lifting, but it is not defining the flat embedding. On the other hand we 
still use barycentric coordinates to determine the flat embedding. A crucial step in our algorithm
is the construction of a balanced set of barycentric coordinates, which corresponds
to face volumes in the flat embedding. Initially, all faces have the same volume, but to prevent large heights in the lifting, 
we increase the volumes of the small faces. To see which faces must be blown up, we make use of a 
decomposition technique from data structural analysis called \emph{heavy path decomposition} \cite{T83}. 
Based on this decomposition, we subdivide the stacked polytope 
into a hierarchy of (serpentine) stacked polytopes, which we use to define the barycentric coordinates.
At this stage the lifting of the flat embedding would result in a grid embedding with exponential
coordinates. But since we have balanced the volume assignments, we can 
allow a small perturbation of the embedding, while maintaining its convexity.
Analyzing the size of the feasible perturbations shows that we can round to 
points on a polynomially sized grid.

A preliminary version of this work was presented at the 22nd ACM-SIAM Symposium on Discrete Algorithms (SODA)
in  San Francisco~\cite{DS11}. In the preliminary version we were focused on the more prominent 
3-dimensional case and did not present any bounds for higher-dimensional
polytopes. For the sake of a unified presentation we changed the construction
of the lifting slightly. In the preliminary version we defined the constructed stress as a linear
combination of stresses defined on certain $K_4$s. In this paper we specify the ``movement''
for every stacked vertex (its vertical shift) directly. This can be considered
as the dual definition of the lifting. Another difference is the more careful analysis 
of the size of the $z$-coordinates in the final embedding. In contrast to the preliminary
version, where we presented bound of $224{,}000 n^{18}$ for the $z$-coordinates, we present
a different method for bounding the height of the  lifting, which 
yields an upper bound of $18 n^{7.76}$ instead. We remark also that the conference version contained a flaw in the area assignment, which is now fixed by a slightly modified construction (Lemma~\ref{lem:balance}). In a paper that followed the preliminary 
version Igamberdiev and Schulz introduced a duality transformation for 3-polytopes that allows to 
control  the grid size of the dual polytope~\cite{IS13}. By this our results for stacked polytopes can be transferred to
their dual polytopes (truncated 3-polytopes), which shows
that also this class can be realized on a polynomial-sized grid.

\section{Specifying the geometry of stacked polytopes}
In this section we develop the necessary tools for defining an embedding
of a stacked $d$-polytope with the two-stage approach. We present the framework
for our algorithm by defining its basic construction and by identifying 
certificates to guarantee its correctness. The actual algorithm is then presented
in Section~\ref{sec:algo}.

\subsection{Matrices, determinants, simplices}
We start our presentation with introducing some notation. Assume that
$S=(\es_1,\ldots , \es_k)$ is a sequence of $(k-1)$-dimensional vectors. Then we
denote by $(\es_1,\ldots , \es_k)$ the matrix whose row vectors (in order) are 
$\es_1,\ldots , \es_k$. We define as
$$[S]= \det \begin{pmatrix}  \es_1 & \es_2 & \hdots & \es_k  \\ 1 & 1 & \hdots
&1 \end{pmatrix}. $$
Notice that $[ S]$ equals $(k-1)!$ times the signed volume of the simplex spanned by
$\es_1,\ldots , \es_k$. When working with sequences we use the binary 
$\conc$ operator to concatenate two sequences. If there is no danger of confusion
we identify an element with the singleton set (or sequence) that contains this element. If $f$ is 
a function on $\mathcal{U}$ we extend $f$ in the natural way to act on 
 sequences on $\mathcal{U}$, i.e., we set $f((a_1,\ldots,a_z)):=(f(a_1),\ldots,f(a_z))$.

Throughout the paper we denote for a sequence $X$ of affinely independent points the
simplex spanned by $X$ by $\Delta_X$. If this simplex is the realization
of some face of an embedding of a polytope we denote this face by $f_X$.
For a $d$-polytope we call a $(d-1)$-face a \emph{facet}, and a $(d-2)$-face a \emph{ridge}.

By convention, we understand as $\R^k$ the space spanned by the first $k$ standard basis vectors. 
In this sense we consider the subspace $\R^{k-1}$ as space embedded inside $\R^k$.
Furthermore, when we speak about hyperplanes in the following, we consider only those hyperplanes in $\R^k$ that are not orthogonal to $\R^{k-1}$.

%In the following we work with sequences of points, but we mainly think of a
%sequence as indexed set. In particular we assume that no point appears in a
%sequence more than once. If there is no danger of confusion we use set
%operations such as set difference for sequences in the natural way.
%For any affinely independent point set in $\R^d$ we denote this simplex spanned
%by $S$ as $\Delta_S$. If furthermore this simplex is a face in an embedding of a
%simplicial complex, then we denote this face by $f_S$. 

\subsection{Flat embedding}
An important intermediate step in our embedding algorithm is the construction of a
flat embedding, which is a simplicial complex constructed by repeated 
weighted barycentric subdivisions of a $(d-1)$-simplex $\Delta_B$. The 
combinatorial structure of the flat embedding represents the face lattice of the stacked polytope. 
Let $\Delta_S$ be a $(d-1)$-simplex and let $\Delta_1,\Delta_2,\ldots,\Delta_d$ be the
$(d-2)$-simplices that define its boundary. A \emph{barycentric subdivision}
adds a new point $\p$ in the interior of $\Delta_S$ and splits $\Delta_S$ into
$d$ $(d-1)$-simplices. For every simplex $\Delta_i$ we obtain a new simplex
$\Delta'_i$ spanned by   $\Delta_i\cup\{\p\}$. If  the
volume of all $\Delta'_i$s are given such that they sum up to the  volume of
$\Delta_S$, then there is one unique vertex $\p$, such that the subdivision
respects this volume assignment. We call the  volumes the $\Delta'_i$s the
\emph{barycentric coordinates} of $\p$ with respect to $\Delta_S$. 

We apply repeated barycentric subdivisions to create (a flat) embedding of
a stacked polytope. In particular, the subdivisions carry out the stacking
operations geometrically (projected in the $z=0$ hyperplane). We 
start the subdivision process with two copies of a $(d-1)$-simplex in the $z=0$ hyperplane
 glued along the boundary. One of these ``initial'' simplices will be the \emph{base face} $f_B$,
which remains unaltered during further modifications. The other face will be
repeatedly subdivided such that the combinatorial structure of the stacked polytope 
is obtained. The
constructed realization is still flat, which means that it lies in the 
$z=0$ hyperplane. 
%To complete our construction we assign
%every vertex that is not on the base face a positive $z$-coordinate. 

\subsection{Liftings}

Our algorithm for realizing stacked polytopes is based on the \emph{lifting technique}.
The assignment of an additional coordinate to every vertex 
in the flat embedding is called a \emph{lifting}. For the assignment of a new
coordinate $z$ to the point $\p \in \R^{d-1}$ we use the shorthand notation $(\p,z)$.
The lifting is complemented by the
projection function $\pi\colon \R^d \to \R^{d-1}$ that simply removes the $d$-th coordinate.
As shortcut notation we write $\llbracket S
\rrbracket$ for the expression $[\pi(S)]$. In the following we refer to 
the $d$-th coordinate of a point in $\R^d$ as its $z$-coordinate.
% We extend the functions $\lift$ and $\pi$ such that the are applicable on
% sequences of points. In this case we evaluate the functions for every member of
% the sequence.

A hyperplane $h\subset \R^d$ is
characterized by a function $z_h\colon \R^{d-1}\to \R$ that assigns to every
point $\p\in\R^{d-1}$ a new coordinate, such that $\lift(\p,z_h(\p))$ lies on
$h$. We denote by $h(S)$ the hyperplane \emph{spanned} by $S$ and use as shortcut
notation $z_S$ for $z_{h(S)}$. 
%The function $z_h$ can be expressed as
%\begin{equation}\label{eq:defz}
% z_h(\p):=\langle \ah , \bar \p \rangle, 
%\end{equation}
%where $\ah\in \R^{d}$ is a vector that depends on $h$. If we know a set $S$ that spans $h$ we can derive
%an alternative expression for $z_h$ given by the following lemma.
The following lemma gives us a convenient expression for the function $z_S$.
\begin{lem}\label{lem:altz}
Let $S$ be a set of $d$ affinely independent points from $\R^d$. 
For every point $\p\in \R^{d-1}$ we have
$$ z_S( \p)= \frac{[  S\conc  \lift(\p,0) ]\;}{\llbracket S \rrbracket}.$$

\end{lem}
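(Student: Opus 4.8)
The plan is to use the determinantal characterization of membership in an affine hyperplane and then extract the coefficient of $z$ by a single cofactor expansion.

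First I would recall that a point $\mathbf{x}\in\R^d$ lies on the hyperplane $h(S)$ spanned by the affinely independent sequence $S=(\es_1,\dots,\es_d)$ exactly when the $d+1$ points of $S\conc\mathbf{x}$ are affinely dependent, i.e. when $[S\conc\mathbf{x}]=0$. By the standing convention $h(S)$ is not orthogonal to $\R^{d-1}$, so it is the graph of the affine function $z_h=z_S$, and consequently, for a fixed $\p\in\R^{d-1}$, the number $z_S(\p)$ is the unique real $z$ with $[S\conc\lift(\p,z)]=0$. Hence it suffices to check that this $z$ equals $[S\conc\lift(\p,0)]\big/\llbracket S\rrbracket$.

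Now I would compute the determinant. Writing $\es_i=(\pi(\es_i),z_i)$, let $M$ be the $(d+1)\times(d+1)$ matrix whose $j$-th column for $j\le d$ is $\es_j$ followed by a $1$, and whose last column is $\lift(\p,z)$ followed by a $1$; by definition $[S\conc\lift(\p,z)]=\det M$, the $d$-th row of $M$ is $(z_1,\dots,z_d,z)$, and the bottom row is $(1,\dots,1)$. Expanding $\det M$ along the $d$-th row, the only term involving $z$ is the one coming from the $(d,d+1)$ entry; its complementary minor is obtained by deleting the $z$-row and the last column, hence equals $[\pi(S)]=\llbracket S\rrbracket$, and its sign is $(-1)^{d+(d+1)}=-1$. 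The sum of all remaining cofactor terms is precisely $\det M$ with $z$ replaced by $0$, which is $[S\conc\lift(\p,0)]$. Therefore
$$[S\conc\lift(\p,z)] \;=\; [S\conc\lift(\p,0)] \;-\; z\,\llbracket S\rrbracket .$$

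Finally, since $h(S)$ is not orthogonal to $\R^{d-1}$ the projected points $\pi(\es_1),\dots,\pi(\es_d)$ are affinely independent, so $\llbracket S\rrbracket\neq0$; setting the left-hand side to $0$ and solving yields the asserted formula. There is essentially no obstacle in this argument: the only steps needing care are the sign $(-1)^{2d+1}=-1$ and the remark that the $z$-free part of $\det M$ is literally $[S\conc\lift(\p,0)]$, both immediate from the definition of $[\,\cdot\,]$. As an alternative one could observe that both sides of the identity are affine in $\p$ and that they coincide at the $d$ affinely independent points $\pi(\es_1),\dots,\pi(\es_d)$ --- the left side giving $z_i$ because $\es_i\in h(S)$, and the right side collapsing to $z_i$ after subtracting column $i$ from the last column --- but the direct expansion above is shorter.
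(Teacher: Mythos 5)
Your proof is correct, but it takes a genuinely different route from the paper's. The paper argues geometrically: it compares the $d$-simplex $\mathcal{P}_1$ spanned by $\pi(S)$ and $(\p,z_S(\p))$ with the $d$-simplex $\mathcal{P}_2$ spanned by $S$ and $(\p,0)$, observes that the shear $(\q,z)\mapsto(\q,z_S(\q)-z)$ carries one to the other and hence preserves volume, equates a pyramid-volume formula for $\mathcal{P}_1$ with the determinant formula for $\mathcal{P}_2$, and then runs a separate continuity/example argument to pin down the sign. You instead characterize $z_S(\p)$ as the unique root of $[S\conc\lift(\p,z)]=0$ and compute that root by a single cofactor expansion along the $z$-row, obtaining the linear identity $[S\conc\lift(\p,z)]=[S\conc\lift(\p,0)]-z\,\llbracket S\rrbracket$ directly. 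Your approach has the advantage that the sign drops out automatically from the cofactor $(-1)^{d+(d+1)}=-1$, so no separate sign check is needed; the paper's approach has the advantage of making the volume/pyramid intuition explicit, which resurfaces later in the paper (e.g.\ in Lemma~\ref{lem:int}). One minor point worth keeping in your write-up is exactly the one you made: the standing convention that $h(S)$ is not a vertical hyperplane is what guarantees $\llbracket S\rrbracket\neq0$, and this is needed both for uniqueness of the root and to divide at the end.
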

\begin{proof}
We give a geometric proof of the lemma. Consider the $d$-simplex $\mathcal{P}_1$ spanned by
$\pi(S)$ and $(\p,z_S(\p))$, and the $d$-simplex $\mathcal{P}_2$ spanned by
$S$ and $(\p,0)$ as depicted in Fig.~\ref{fig:lemma1}. Simplex $\mathcal{P}_1$ is the
affine image of $\mathcal{P}_2$, under the mapping $(\q,z) \mapsto 
(\q,z_S(\q)-z)$. Since this mapping is a linear shear, it preserves the volumes of the simplices. 
The simplex $\mathcal{P}_1$ can be understood as a pyramid with base $\llbracket S \rrbracket$ and height $z_S(\p)$
and therefore its absolute volume $\mathit{vol}$ equals 
$$
	\mathit{vol}(\mathcal{P}_1)  =  \left\lvert \frac{1}{d(d-1)!} \llbracket S \rrbracket z_S(\p) \right\lvert.
$$
On the other hand we can express the absolute volume of $\mathcal{P}_2$ using the determinant formula and obtain 
$$
	\mathit{vol}(\mathcal{P}_2)  = \left\lvert \frac{1}{d!} [S \conc (\p,0)] \right\rvert. 
$$
Setting  $\mathit{vol}(\mathcal{P}_1)= \mathit{vol}(\mathcal{P}_2)$ proves $ z_S( \p)= \lvert [  S\conc  \lift(\p,0) ] / \llbracket S \rrbracket \rvert.$ 
It remains to check if the sign of the expression in the lemma is correct. 
By the geometric argument the correctness does not depend on the actual configuration given by $S$ and $\p$. To see this note that for a fixed $\p$ a sign change in $[S \conc (\p,0)]$ occurs if and only if it occurs in $\llbracket S \rrbracket$.
Moreover, if we keep $S$ fixed, a sign change of $[S \conc (\p,0)]$ can only happen
if $(\p,0)$ changes its location relative to $h(S)$, which on the other hand also changes
the sign for $z_S(\p)$.
Hence, it suffices to look at a concrete example. 
Let $S$ be given by the standard basis of $\R^d$, and let $\p$ be the origin
in $\R^{d-1}$. In this case, clearly, $[S \conc (\p,0)] = +1$ and  $\llbracket S \rrbracket = +1$. Since $z_S(\p)=1$ the sign is correct.
\end{proof}
\begin{figure}
\begin{center}
	\includegraphics[width=.8\columnwidth]{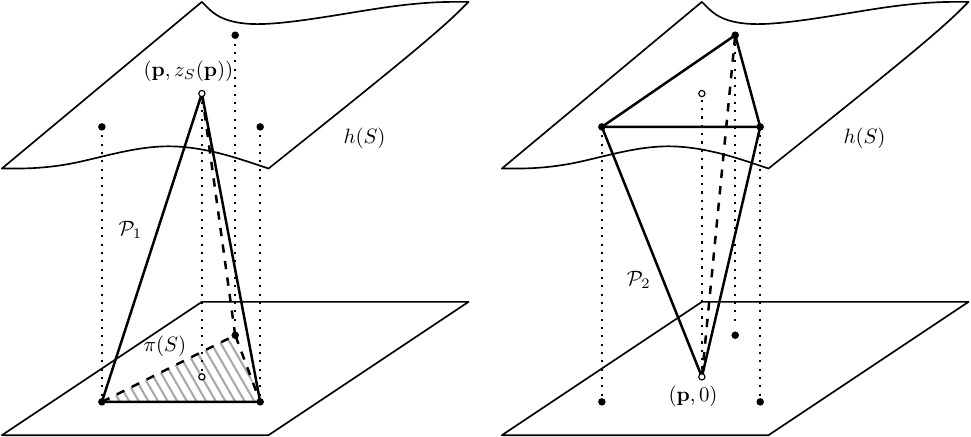}
\caption{The simplices $\mathcal{P}_1$ and $\mathcal{P}_2$ as defined in the proof of Lemma~\ref{lem:altz}. Both pyramids have the same (absolute) volume.}
\label{fig:lemma1}
\end{center}
\end{figure}

\subsection{Creases and stresses} 
Let $f,g$ be two hyperplanes in $\R^d$. Furthermore, let $S$, resp. $T$, be a
sequence of $d$ affinely independent points of $f$, resp. $g$, such that
deleting the last element in $S$ and $T$ gives the same subsequence  $X$. It
follows that $f$ and $g$ intersect in a flat of dimension $d-2$ that contains
the simplex $\Delta_X$. Furthermore, we denote by $\mathbf{s}_d$ the last point in
the sequence $S$, that is the point not in $T$, and similarly we denote by $\te_d$ 
the point in $T$ that is not in $S$. We also set $\er=\pi(\mathbf{s}_d)$. The situation is
depicted in Fig.~\ref{fig:crease}.
We define as \emph{creasing} on $X$
\begin{equation}\label{eq:defcrease}
\creas(f,g,X):= \frac{z_{T}(\er) - z_{S}(\er)}{\llbracket S \rrbracket}.  
\end{equation}
Roughly speaking, the creasing is a measure for the intersection angle of $f$
and $g$ with respect to the volume of  $\Delta_X$. 
%If there is no danger of confusion we omit the two hyperplanes and write $c(X)$ for short.
\begin{figure}[htb]
\begin{center}

  \includegraphics[width=.50\columnwidth]{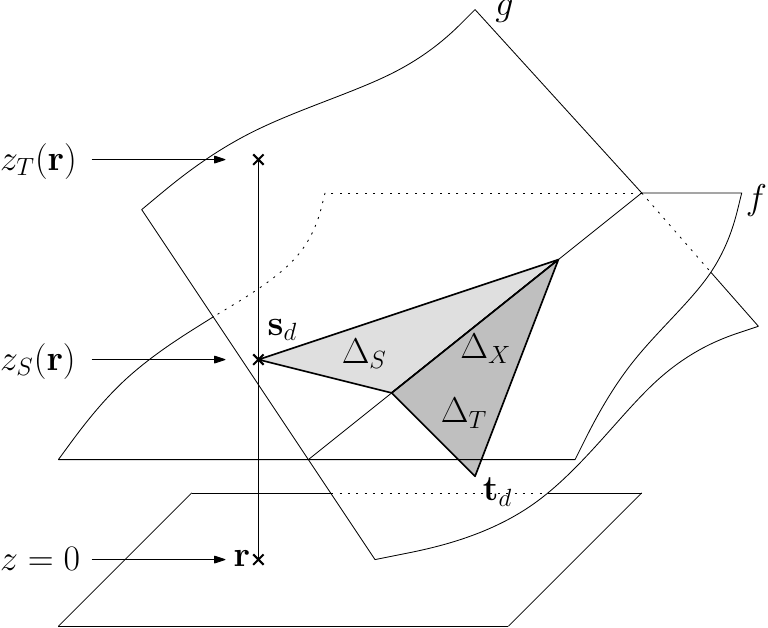} 
\caption{The creasing on $X$ is defined in terms of the objects depicted in the figure (for $d=3$). }
  \label{fig:crease}
\end{center}
\end{figure}

The following lemma shows that the function $c$ is well-defined.
\begin{lem}\label{lem:crease}
Let $c(f,g,X)$ be defined as above, i.e.,  $S$ spans $f$,  $T$
spans $g$, such that $S=X\conc \es_d$, and $T=X\conc \te_d$.

\begin{enumerate}
\item[(a)] It holds that $\creas(f,g,X)=-\creas(g,f,X)$.
\item[(b)] The value of $\creas(f,g,X)$ is independent of the choice of $\es_d$
and $\te_d$.
\end{enumerate}
\end{lem}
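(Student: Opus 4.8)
The plan is to collapse the definition of the creasing into a single determinant and then read off both parts. Concretely, with $S=X\conc\es_d$ and $T=X\conc\te_d$ as in the statement, I would establish
$$\creas(f,g,X)\;=\;\frac{[\,X\conc\te_d\conc\es_d\,]}{\llbracket S\rrbracket\,\llbracket T\rrbracket}\,.$$
Three observations get us there. First, the denominators are nonzero: by the standing convention a hyperplane is never vertical, so $\pi$ restricts to an affine isomorphism on each of $f$ and $g$, hence $\pi(S)$ and $\pi(T)$ are affinely independent. Second, $z_S(\er)$ is just the $z$-coordinate of $\es_d$, since $\es_d\in f=h(S)$ and $\pi(\es_d)=\er$, so $(\er,z_S(\er))=\es_d$. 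Third, I would extend Lemma~\ref{lem:altz} to points off the hyperplane $\{z=0\}$: for any $\q\in\R^d$ and any set $T$ of $d$ affinely independent points, $z_T(\pi(\q))-\q_d=[\,T\conc\q\,]/\llbracket T\rrbracket$, where $\q_d$ is the $z$-coordinate of $\q$. This follows from Lemma~\ref{lem:altz} (the case $\q_d=0$) together with the identity $[\,T\conc\q\,]=[\,T\conc(\pi(\q),0)\,]-\q_d\,\llbracket T\rrbracket$, which I would obtain by writing the last column $(\q,1)$ of $[\,T\conc\q\,]$ as $(\pi(\q),0,1)+\q_d\,e$ with $e$ the $d$-th standard basis vector of $\R^{d+1}$, splitting the determinant by multilinearity in that column, and expanding the second term along the column $e$: its unique nonzero entry sits in the row whose deletion leaves exactly the matrix defining $\llbracket T\rrbracket$, and the cofactor sign is $-1$. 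Taking $\q=\es_d$ gives $z_T(\er)-z_S(\er)=[\,T\conc\es_d\,]/\llbracket T\rrbracket$, and since $T\conc\es_d=X\conc\te_d\conc\es_d$, dividing by $\llbracket S\rrbracket$ produces the formula.

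Given the formula, part (a) is essentially free: exchanging $f$ and $g$ interchanges $S\leftrightarrow T$ and $\es_d\leftrightarrow\te_d$, so $\creas(g,f,X)=[\,X\conc\es_d\conc\te_d\,]/(\llbracket T\rrbracket\,\llbracket S\rrbracket)$; the determinant $[\,X\conc\es_d\conc\te_d\,]$ differs from $[\,X\conc\te_d\conc\es_d\,]$ by transposing its last two columns, hence by a factor $-1$, while the product of denominators is unchanged, so $\creas(g,f,X)=-\creas(f,g,X)$.

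For part (b) I would fix $\te_d$ and replace $\es_d$ by another admissible point $\es_d'$ (the case of changing $\te_d$ is symmetric, or follows from (a)). Since $X\subset f$ and $f$ is a hyperplane, $S=X\conc\es_d$ is an affine basis of $f$, so $\es_d'\in f$ is a unique affine combination of the points of $X$ and $\es_d$; let $\mu$ be the coefficient of $\es_d$. Then $\mu\neq0$, for otherwise $\es_d'\in\mathrm{aff}(X)$ and $X\conc\es_d'$ would not span $f$. Subtracting the matching multiples of the $X$-columns from the last column of $[\,X\conc\te_d\conc\es_d'\,]$ turns it into $\mu$ times the $\es_d$-column, so $[\,X\conc\te_d\conc\es_d'\,]=\mu\,[\,X\conc\te_d\conc\es_d\,]$; performing the identical operations on the projected configuration gives $\llbracket X\conc\es_d'\rrbracket=\mu\,\llbracket X\conc\es_d\rrbracket$. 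The factors of $\mu$ cancel in the formula, so the value of $\creas(f,g,X)$ does not change.

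The only genuinely fiddly point is the determinant identity $[\,T\conc\q\,]=[\,T\conc(\pi(\q),0)\,]-\q_d\,\llbracket T\rrbracket$ in the first step: it is routine multilinear algebra, but one must keep the concatenation order, the position of the appended row of $1$'s, and the cofactor sign all consistent with the conventions of Section~2.1. Once that is pinned down, parts (a) and (b) are pure bookkeeping.
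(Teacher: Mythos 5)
Your proof is correct, and it arrives at the paper's own key formula, namely $\creas(f,g,X)=[T\conc\es_d]/(\llbracket T\rrbracket\llbracket S\rrbracket)$, which is exactly Equation~\eqref{eq:altstress} in the paper. The routes to that formula are equivalent: you prove the pointwise extension $z_T(\pi(\q))-\q_d=[\,T\conc\q\,]/\llbracket T\rrbracket$ of Lemma~\ref{lem:altz} and specialize to $\q=\es_d$, while the paper substitutes Lemma~\ref{lem:altz} into the definition and then uses the identity $[T\conc(\er,0)]-[T\conc(\er,z_S(\er))]=z_S(\er)\llbracket T\rrbracket$; both amount to the same cofactor expansion in the last column, and your worked-out sign $(-1)^{d+(d+1)}=-1$ is consistent with the conventions of Section~2.1. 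Part (a) is proved identically in both (single column swap). Part (b) is where you genuinely diverge: you vary $\es_d$, express $\es_d'$ as an affine combination of $X\cup\{\es_d\}$ with coefficient $\mu\neq 0$ on $\es_d$, and cancel the factor $\mu$ appearing simultaneously in $[X\conc\te_d\conc\es_d']$ and $\llbracket X\conc\es_d'\rrbracket$. The paper instead argues the ``easy'' direction first — independence from $\te_d$ is immediate from the original definition \eqref{eq:defcrease} because $z_T$ is a function of the hyperplane $g$, not of the chosen basis — and then invokes (a) for the remaining case. Your argument is more uniform and stays entirely inside the determinant formula; the paper's is shorter because it exploits that the dependence on $\te_d$ is invisible in the original definition. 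Both are sound.
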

\begin{proof}
Let $T$ be the sequence $(\te_1,\ldots,\te_d)$ and let $z_i$ be the $z$-coordinate of $\te_i$. We use Lemma~\ref{lem:altz} to
rephrase $c(f,g,X)$ by substituting $z_T(\er)$ and obtain 
\begin{align}\label{eq:altstress0}
c(f,g,X) = \frac{z_{T}(\er) - z_{S}(\er)}{\llbracket S \rrbracket}& =\frac{[T\circ
\lift(\er,0)] - z_{S}(\er)\llbracket T \rrbracket}{\llbracket S \rrbracket \llbracket T \rrbracket}.
\end{align}
To further evaluate the last expression we notice that
$$ [T\conc \lift(\er,0)] - [T\conc \lift(\er,z_S(\er))] = z_S(\er)\llbracket T \rrbracket.$$
Since $[T \conc (\er, z_S(\er))]=  [T \conc \es_d]$, we obtain that $[T\circ \lift(\er,0)] - z_S(\er)\llbracket T\rrbracket = [T \conc \es_d]$.
Plugging the last equation into~\eqref{eq:altstress0} gives 
\begin{equation}\label{eq:altstress}
c(f,g,X)=\frac{z_{T}(\er) - z_{S}(\er)}{\llbracket S \rrbracket} =
\frac{[T\circ
\lift(\er,0)] - z_{S}(\er)\llbracket T \rrbracket}{\llbracket S \rrbracket \llbracket T \rrbracket} =
\frac{[T\conc \es_d]}{\llbracket T \rrbracket\llbracket S \rrbracket}. 
\end{equation}
Notice that $[T\conc \es_d]=-[S\conc \te_d]$, since both underlying
matrices differ only by one column swap (the sequence of the first
$d-1$ members 
of $S$ and $T$ coincide). As a consequence, exchanging $S$ and $T$ in the right
hand side of \eqref{eq:altstress} changes only the sign, which proves statement
(a).

To prove statement (b) we argue as follows. Assume we have replaced 
$\te_d$ in $T$ (that is the only point in $T$ not in $X$) by some other
point on $h(T)$. The replacement of $\te_d$
does not change $z_T(\er)$ in Equation~\eqref{eq:defcrease}, and all other parts
of the equation only depend on $S$. Therefore, the expression given in~\eqref{eq:defcrease}  
does not depend on $\te_d$. It remains to show, that also changing the
last point in $S$, does not change the creasing of $X$. However, this follows
easily since by statement (a) we can interchange the roles of $S$ and $T$ (this results in
a sign change), then we apply the above argument, and then we change $S$ and $T$
back (which cancels the sign change).
\end{proof}

Based on the concept of creasing we are now ready to define the crucial concept
of stress. Let us first introduce the notion of right and left facet. We assume 
that we have given a flat embedding at this point.
Let $X$ be the set of vertices of some ridge $f_X$ of $P$. We fix an arbitrary order for
$X$, and do so also for the vertex sets of the other ridges. The ridge $f_X$ separates
two facets of $P$, say $f_S$ and $f_T$. 
As usual, $S=X\cup{\es_d}$ is the vertex set of $f_S$. If $\llbracket X \conc \es_d \rrbracket >0$
then $f_S$ is called \emph{left of} $f_X$, if  $\llbracket X \conc \es_d \rrbracket <0$
then $f_S$ is called \emph{right of} $f_X$. We have one exception from this rule: for the 
special (base) face $f_B$ the notion of left and right is interchanged.
Note that the definitions left/right are independent
of the choice of the $z$-coordinates, and only depend on the flat embedding.
Note also, that every ridge has a right and a left incident facet.

\begin{defn}[stress]
Let $P$ be stacked $d$-polytope obtained by lifting a flat embedding and
let $X$ be a sequence of vertices which defines a ridge. We
define as \emph{stress on $X$}
\begin{equation}\label{eq:stressdef} \omega_X:= \creas(h_l,h_r,X),\end{equation}
where $h_r$ contains the facet right of $X$ and $h_l$ contains the facet left of
$X$.
\end{defn}
We can immediately deduce the following property for the stress along $X$.
\begin{lem}\label{lem:stress-orient}
The stress on $X$ is not affected by the order of the elements in $X$.
\end{lem}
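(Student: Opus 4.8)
The key point is that the stress $\omega_X = \creas(h_l,h_r,X)$ is defined via the creasing formula, and by Lemma~\ref{lem:crease}(b) the creasing $\creas(f,g,X)$ does not depend on the auxiliary points $\es_d$ and $\te_d$ used to extend $X$ into spanning sets of $f$ and $g$. So the only way the order of $X$ could matter is through (i) the choice of which incident facet is called ``left'' and which ``right,'' and (ii) the sign/value of the normalizing quantity $\llbracket S \rrbracket$ appearing in the creasing formula~\eqref{eq:defcrease}, or equivalently $\llbracket T\rrbracket\llbracket S\rrbracket$ in the closed form~\eqref{eq:altstress}. I would organize the proof around a single transposition of two elements of $X$, since any reordering is a composition of transpositions, and track how each ingredient changes sign.

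The main steps are as follows. First I would fix a spanning set $S = X\conc\es_d$ of $h_l$ (say) and use the closed-form expression $\creas(h_l,h_r,X) = [T\conc\es_d]/(\llbracket T\rrbracket\llbracket S\rrbracket)$ from~\eqref{eq:altstress}, with $T = X\conc\te_d$ a spanning set of $h_r$. Now swap two entries of $X$ to get a reordered sequence $X'$, and set $S' = X'\conc\es_d$, $T' = X'\conc\te_d$. Swapping two rows flips the sign of a determinant, so $[T'\conc\es_d] = -[T\conc\es_d]$ — wait, I need to be careful: the swap happens inside the block $X$, which is present in all three determinants $[T\conc\es_d]$, $\llbracket S\rrbracket$, $\llbracket T\rrbracket$. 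Thus $[T'\conc\es_d] = -[T\conc\es_d]$, $\llbracket S'\rrbracket = -\llbracket S\rrbracket$, and $\llbracket T'\rrbracket = -\llbracket T\rrbracket$. Plugging into the formula, the numerator picks up one sign change and the denominator picks up two, so the ratio is multiplied by $(-1)/((-1)(-1)) = -1$. So naively the creasing value flips sign under a transposition of $X$.

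The resolution — and this is the step I expect to be the real content — is that the labels ``left facet'' and ``right facet'' are themselves defined using the sign of $\llbracket X\conc\es_d\rrbracket$, which flips under the same transposition. So when we reorder $X$, the facet that was ``left of $X$'' becomes ``right of $X'$'' and vice versa; i.e. $h_l$ and $h_r$ swap roles. By Lemma~\ref{lem:crease}(a), $\creas(h_r,h_l,X') = -\creas(h_l,h_r,X')$. Combining this extra sign flip with the sign flip computed above, the two cancel and $\omega_{X'} = \omega_X$. I would also remark that the base-face exception (where left/right are interchanged) does not affect this argument, since it interchanges the two labels uniformly and hence does not depend on the ordering of $X$. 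The main obstacle is purely bookkeeping: making sure the sign coming from the determinant identities and the sign coming from the left/right relabeling are tracked consistently and genuinely cancel for a single transposition; once that is done, extending to arbitrary permutations of $X$ is immediate since transpositions generate the symmetric group.
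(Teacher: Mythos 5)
Your proposal is correct and follows essentially the same route as the paper's proof: both observe that a reordering of $X$ flips the sign of the creasing and simultaneously swaps the left/right labels, and then use Lemma~\ref{lem:crease}(a) to see that these two sign flips cancel. You are somewhat more explicit in reducing to a single transposition and tracking the sign change through the determinant identity~\eqref{eq:altstress}, whereas the paper simply asserts that the sign of the creasing may flip; the substance is the same.
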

\begin{proof}
When reordering the elements of $X$, only the sign of a creasing along $X$ might change.
When it flips then our  notion of left and right face will also be exchanged, which means that
the left face becomes the right face with respect to $X$ and vice versa.
Let  $\bar X$ be a reordering of $X$, such that the creasing along $X$ changes it sign and the right/left
position of the faces is swapped. Suppose that $f$ contains the face left of $\bar X$ and right of $X$,
and that $g$ contains the other face incident to $X$. Then we have
by~Lemma~\ref{lem:crease}
$$\omega_{\bar X}=c(f,g,\bar X)=-c(f,g,X)=c(g,f,X)=\omega_X.$$
\end{proof}

We remark that the stress that we define is an equilibrium stress in the classical sense. In fact,
in three dimensions it corresponds to the stress that is given by the Maxwell--Cremona correspondence.
In particular,
the formulation based on~\eqref{eq:defcrease} can be found in a similar form in 
Hopcroft and Kahn~\cite[Equation~(11)]{HK92}.

%
%our definition of stress corresponds to the classical definition 
%of equilibrium stress as used in the Maxwell--Cremona correspondence. In particular,
%the formulation based on~\eqref{eq:defcrease} can be found in a similar form in 
%Hopcroft and Kahn~\cite[Equation~11]{HK92}.
%
% Wenn mans noch braucht kann mann hier noch frontward backward fr die
%Projektionen der Flchen definieren.
%

\subsection{Convexity}
The difficult part for the height assignment is to choose the heights such that the
resulting lifting gives a \emph{convex} realization of the simplicial complex. 
To guarantee the convexity of the
final embedding we use the stresses induced by the lifting as a certificate. By knowing the signs
of all stresses we can determine if the selected $z$-coordinates produce a
convex embedding with help of the following lemma.
\begin{lem}\label{lem:stresssign}
Let $P$ be the lifting of a flat embedding, such that all vertices
have been assigned with a nonnegative $z$-coordinate. If 
\begin{itemize}
  \item for all $f_X$ incident to $f_B$ we have $\omega_X <0$ and
\item for all other $f_X$ we have $\omega_X>0$,
\end{itemize}
then $P$ is a convex polytope.
\end{lem}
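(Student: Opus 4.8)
\textbf{Proof plan for Lemma~\ref{lem:stresssign}.}

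The plan is to reduce convexity of the lifted complex to a local condition at each ridge, and then show that the sign hypotheses on the stresses encode exactly that local condition. First I would recall the standard fact (used implicitly in the Maxwell--Cremona setting and easily proved directly for simplicial complexes) that a lifting of a flat simplicial complex whose underlying $(d-1)$-complex is a subdivision of a $(d-1)$-simplex is the lower (or upper) convex surface of a convex polytope if and only if the lifting is \emph{locally convex across every interior ridge}: for each ridge $f_X$ with incident facets $f_S$ and $f_T$, the facet $f_T$ lies weakly on one fixed side of the hyperplane $h(S)$, consistently oriented over all ridges. Concretely, ``convex from below'' means that crossing any interior ridge the surface bends upward, i.e.\ the dihedral angle measured from below is reflex. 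So the first step is to state this local-to-global principle in the form I need and indicate why it holds here (the complex is a ball obtained by barycentric subdivisions, so there are no global obstructions such as nontrivial topology; a locally convex immersion of a simply connected piece of a PL surface is the graph of a convex function up to the handling of the boundary facet $f_B$).

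Second, I would translate ``the surface bends upward across $f_X$'' into a sign statement about the creasing $\creas(h_l,h_r,X)$. Using the closed form from Lemma~\ref{lem:crease}, namely $\creas(f,g,X) = [T\conc \es_d]/(\llbracket T\rrbracket\,\llbracket S\rrbracket)$, together with Lemma~\ref{lem:altz}, the numerator $[T\conc\es_d]$ has a sign that records whether the apex $\es_d$ of $f_S$ lies above or below the hyperplane $h(T)$ carrying $f_T$; dividing by $\llbracket S\rrbracket\llbracket T\rrbracket$ normalizes for the orientation of the two facets relative to the chosen order of $X$. The key computation here is to check that, with the left/right convention fixed just before the definition of stress (and with the sign flip for $f_B$), the condition ``$f_S$ lies weakly below $h(T)$ and vice versa, i.e.\ the complex is convex from below at $f_X$'' is equivalent to $\omega_X = \creas(h_l,h_r,X) > 0$ for an ordinary interior ridge, and that the inverted convention at $f_B$ is exactly what forces the sign to be reversed, $\omega_X < 0$, for ridges incident to the base face. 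Lemma~\ref{lem:stress-orient} guarantees this is independent of how $X$ was ordered, so the hypothesis is well posed.

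Third, I would assemble the global statement: assuming all vertices have nonnegative $z$-coordinate and the base face $f_B$ stays in the plane $z=0$, the two bulleted sign conditions say precisely that the lifting is locally convex across every ridge, with the ``fold'' along the boundary ridges of $f_B$ oriented so that $f_B$ closes up the surface into the boundary of a bounded convex body rather than opening it up. By the local-to-global principle the union of all lifted facets other than $f_B$ is then the graph of a convex function over the subdivided simplex, and gluing on the flat base facet $f_B$ yields the boundary complex of a convex $d$-polytope $P$; since the combinatorics of the subdivision matches the face lattice of the stacked polytope (as set up in the flat-embedding subsection), $P$ is a realization of it. The main obstacle I expect is the bookkeeping in the second step: getting every sign right simultaneously — the sign of $[T\conc\es_d]$, the two normalizing determinants, the left/right convention, and the special inversion at $f_B$ — and checking that these signs are mutually consistent around ridges that meet several facets, so that ``locally convex at every ridge'' really does glue into one globally convex surface rather than something that folds the wrong way somewhere. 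Verifying this is most safely done by fixing one explicit small configuration (as in the proof of Lemma~\ref{lem:altz}) to pin the global sign, and then invoking the order-independence and the already-established antisymmetry $\creas(f,g,X) = -\creas(g,f,X)$ to propagate it.
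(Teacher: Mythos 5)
Your high-level plan matches the paper's: translate the sign hypotheses into local convexity at each ridge, and then pass from local to global convexity. The sign bookkeeping you worry about in your second step is carried out in the paper exactly as you describe --- a case analysis on $f_S$ being left or right of $f_X$ and on $S=B$ versus $S\neq B$, using the formula $\creas(f_S,f_T,X)=[T\conc\es_d]/(\llbracket S\rrbracket\llbracket T\rrbracket)$ and the inverted orientation convention for $f_B$, so that in every branch the inequality reduces to $z_T(\er)>z_S(\er)$. That part of your plan is sound and is not where the difficulty lies.

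The genuine gap is in your local-to-global step, which you dispatch by appealing to a ``standard fact'' about locally convex immersions of simply connected PL surfaces. As you phrase it the principle is not correct: a locally convex \emph{immersion} of a disk can wind around and fail to bound a convex body, so simple connectivity alone is not the reason it works. What actually matters is that the flat projection is a genuine \emph{embedding} of a triangulation of a convex region, and the paper exploits this directly rather than citing any external theorem. Concretely, the paper first strengthens the local convexity statement: the inequality $z_T(\er)>z_S(\er)$ holds not merely for $\er$ equal to the projected apex of the adjacent facet but for \emph{every} $\er$ in the open half-space of $\R^{d-1}$ bounded by the supporting hyperplane of $\pi(f_X)$ and not containing $\pi(T)$. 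With that in hand it takes an arbitrary vertex $\p$, draws a segment $\ell$ in the flat embedding from $\pi(\p)$ to a point of $\pi(T)$, lists the supporting hyperplanes $h(1)=h(T),h(2),\dots,h(k)$ of the facets that $\ell$ crosses, and obtains the strictly decreasing chain $z_{h(1)}(\er)>z_{h(2)}(\er)>\cdots>z_{h(k)}(\er)=z_\p$, which is global convexity. If you wish to invoke a black-box theorem instead, you would need one whose hypotheses are an embedded (not just immersed) triangulation of a convex region together with ridge-local convexity; but the cascading-segment argument is short and self-contained, and without either of these your proposal does not yet close the proof.
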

\begin{proof}
As a preliminary step we show that $P$ is \emph{locally convex}. By this we mean that for
every facet $f_T\ne f_B$ of $P$ spanned by $T$, all
adjacent facets lie ``below'' the hyperplane spanned by $T$.
Let $f_S$ be a face that is adjacent to $f_T$ along the ridge $f_X$.
The sequences $S$ and $T$ are ordered such that they coincide on the first $d-2$
elements, that is $X$. We assume for now that $f_T$ is right of $f_X$, which implies $\llbracket T \rrbracket <0$.
\\{\bf Case 1 ($S\not=B$)}: 
We denote by $\er$ the vertex in $S$ that is not in $T$ projected into the $z=0$ hyperplane.
By definition $f_S$ is left of $f_X$ and hence $\llbracket S \rrbracket>0$. 
Since $\omega_X>0$, we
know that $c(f_S,f_T,X)>0$.
This leads to
$$\omega_X>0 \iff  c(f_S,f_T,X)>0 \iff \frac{z_T(\er)-z_S(\er)}{\llbracket S
\rrbracket}>0 \iff z_T(\er)>z_S(\er).$$
Hence the interior of $f_S$ lies below $h(T)$.
\\{\bf Case 2 ($S=B$)}:  
The only difference here is that by assumption $\omega_X<0$ and that $\llbracket
S \rrbracket$ and $\llbracket T \rrbracket$ have the same sign, i.e.,
$\llbracket S \rrbracket<0$. Therefore, 
$$ \omega_X<0 \iff c(f_S,f_T,X)<0 \iff \frac{z_T(\er)-z_S(\er)}{\llbracket
S\rrbracket}<0 \iff z_T(\er)>z_S(\er).$$
Again, the interior of $f_S$ lies below $h(T)$.

For both cases we have assumed that $f_T$ is right of $f_X$. 
If $f_T$ would be left of $f_X$ instead, then
the sign of $\llbracket
S \rrbracket$ would change and furthermore, $\omega_X = c(f_T,f_S,X)= - c(f_S,f_T,X)$ by Lemma~\ref{lem:crease}.
Both effects cancel, i.e., for case~1 we have
$$\omega_X>0 \iff  c(f_S,f_T,X)<0 \iff \frac{z_T(\er)-z_S(\er)}{\llbracket S
\rrbracket}<0 \iff z_T(\er)>z_S(\er),$$
and the same ``cancellation'' happens in case~2.
  
We have shown that $P$ is locally convex and will extend this to global convexity. This follows in 
our setting by Theorem~2.3.20 from the book  of De~Loera, Rambau and Santos~\cite{drs10}. For completeness 
we also give the detailed argument in this place.
In particular, we show that for every facet $f_T\neq f_B$ all other vertices of $P$ lie below $h(T)$. 
We first generalize the observations that proved the local convexity. 
In the above estimations we can pick as $\er$ any vertex that lies in the halfspace
of $\R^{k-1}$ that is bounded by the supporting hyperplane of $\pi(f_X)$ and that does not contain $\pi(T)$.
It still holds that $z_T(\er)> z_S(\er)$.

We are now ready to prove that $P$ is a convex polytope. 
Let $\p$ be a vertex of $P$ not in $T$ with $z$-coordinate $z_p$ and let $\pi(\p)=\er$. 
Consider a segment $\ell$ in $\R^{k-1}$ that connects $\er$ with a point on $\pi(T)$.
We can assume that $\ell$ avoids the vertices of $\pi(P)$ in its interior, otherwise we perturb $\ell$ slightly.
Let the supported hyperplanes of the facets visited when traversing $\ell$ be (in order) $h(T)=h(1),h(2),\ldots h(k)$.
Due to the local convexity we have for $\er = \pi(\p)$ that $z_{h(k)}(\er)<z_{h(k-1)}(\er)$. Furthermore, 
due to the generalized observation in the previous paragraph we have for all $j\le k$ that $z_{h(j)}(\er)<z_{h(j-1)}(\er)$.
This yields
$$ z_T(\er) = z_{h(1)}(\er) > z_{h(2)}(\er) > \cdots > z_{h(k)}(\er) = z_p,$$
which proves the assertion for $f_T\neq f_B$. The base face $f_B$ lies in the $z=0$ hyperplane, which 
is clearly a bounding hyperplane since  all vertices have positive $z$-coordinates. Thus all supporting hyperplanes 
of the facets of $P$ are bounding hyperplanes and therefore $P$ is convex.
%We have shown that $\mathcal{C}$ is locally convex, but this is not
%sufficient to prove that the complex encloses a convex body. However, as shown
%in~\cite{MNSS+99},  if the polyhedral surface is locally convex and star-shaped, then it is convex. We
%remove the bounding hyperplane $h(B)$ of the face $f_B$
%from the polyhedral surface $\mathcal{C}$ and get a new polyhedral complex $\mathcal{C}'$.
%Note that if $\mathcal{C}'$ encloses a convex body, then so does $\mathcal{C}$. 
%Every vertical ray emanating from a point in $\mathcal{C}'$ in direction towards $z=-\infty$  does not intersect
%$\mathcal{C}'$ again. Consider now a small projective transformation that projects the point at infinity 
%to a point inside the convex hull of $\mathcal{C}'$. The image of the transformation is star-shaped, and 
%hence encloses a convex body. Since convexity is preserved under projective transformations,
%$\mathcal{C}'$ encloses also a star-shaped and therefore a convex body.
\end{proof}

\begin{figure}[hbt]
\begin{center}
\label{fig:globalconvex}
  \includegraphics[width=.65\columnwidth]{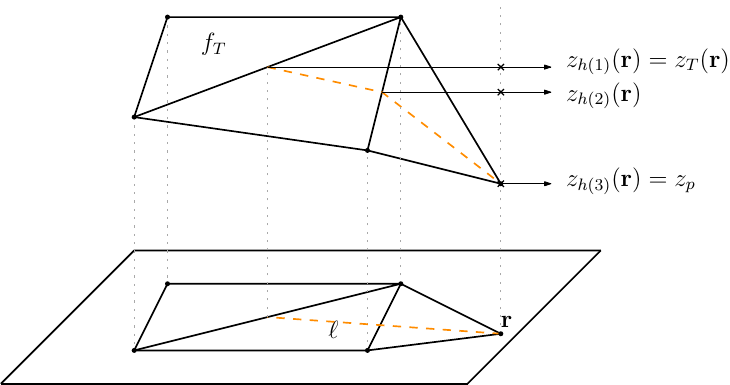}
  \caption{Proof of Lemma~\ref{lem:stresssign}: A cascading sequence of inequalities (coming from the local convexity of $P$) shows that
every vertex of $P$ lies below the supporting hyperplane of a facet $f_T\neq f_B$. 
}
\end{center}
\end{figure}

\subsection{Keeping track of the stresses while stacking}

We discuss next how a stacking operation changes the associated stresses in a
lifted barycentric subdivision. Let $\Delta_D$ be a $(d-1)$-simplex of a
simplicial complex embedded in $\R^d$. We stack the new vertex $\p$ on top of
$D$. To determine the stacking geometrically, we describe the location of
$\er:=\pi(\p)$ inside $\pi(\Delta_D)$ with barycentric coordinates, that is we
specify the absolute areas of the $(d-1)$-simplices containing $\er$ in the projection.
Additionally, we describe how far the $z$-coordinate $z_p$ of $\p$ lies above
$h(D)$. This will be denoted  by $\zeta:=z_p - z_D(\er)$. We refer to $\zeta$ as
the \emph{vertical shift}. 
For convenience we assume that the stacking
process increases the $z$-coordinate of the stacked point, since this will
always be the case in the following. However, the following
observations can be easily generalized for stacking operations that decrease the
$z$-coordinates of the stacked point. 

The addition of $\p$ has two effects. First of all, by stacking $\p$ we create
new ridges, furthermore, the existing ridges at the boundary
$f_D$ are now incident to new facets and hence the corresponding
stress is altered. We refer to the newly introduced ridges as the
\emph{interior ridges}, and to the ridges on the boundary of $f_D$ as
the \emph{exterior ridges} of the stacking operation. 

We discuss first how  the stresses on the exterior ridge are altered. 
%Do we need also a version with \omega_X<0??
\begin{lem}\label{lem:ext}
Assume that we have stacked a new vertex $\p$ 
on a convex polytope realized in $\R^d$ with stress $\omega$ as described in Lemma~\ref{lem:stresssign}. 
Let $X$ be the point set of an exterior ridge  $f_X$ (as defined above).
After the stacking the new stress $\hat \omega$
equals
\[\hat \omega_X:=\omega_X- \frac{\zeta}{\left\lvert \llbracket S \rrbracket\right\rvert}
,\]
where $S=X\conc\p$ and $\zeta$ is the vertical shift of the stacking.
\end{lem}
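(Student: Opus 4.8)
\textbf{Proof strategy for Lemma~\ref{lem:ext}.}

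The plan is to compute both $\omega_X$ (before stacking) and $\hat\omega_X$ (after stacking) directly from the creasing formula~\eqref{eq:defcrease}, express each as a ratio involving determinants via Lemma~\ref{lem:altz}, and then show the difference is exactly $\zeta/|\llbracket S\rrbracket|$. Before stacking, the ridge $f_X$ separates two facets of the polytope, say those lying in hyperplanes $h_l$ and $h_r$; write $\es_d$ for the apex of the facet on one side, and recall $\omega_X = c(h_l,h_r,X)$. After stacking $\p$ on the facet $f_D$, exactly one of the two facets incident to $f_X$ is removed and replaced by a new facet with apex $\p$; the hyperplane on the other side is unchanged. So $\hat\omega_X$ is again a creasing along $X$, but with one of the two spanning hyperplanes replaced by the hyperplane $h(D\cup\{\p\})$ restricted to the facet through $X$ and $\p$. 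The key observation is that the two hyperplanes differ only in how they treat the region on the $\p$-side, and Lemma~\ref{lem:altz} lets us write $z_{X\conc\p}(\er)$ in terms of a determinant that splits off a term proportional to $\zeta$.

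The main computation I expect to carry out: let $f_D$ lie in hyperplane $h$, so the old facet on the relevant side of $f_X$ is $f_D$ itself (i.e.\ $f_X$ is a ridge of $f_D$), and let $\ah$ be the apex of $f_D$ opposite $f_X$, so $D = X\conc\ah$. After stacking, the new facet on that side is $f_{X\conc\p}$. Using~\eqref{eq:defcrease}, the change in stress is governed by $z_{X\conc\p}(\er') - z_{X\conc\ah}(\er')$ where $\er'=\pi(\ah)$, divided by the appropriate $\llbracket\,\cdot\,\rrbracket$; but more cleanly, I would apply Lemma~\ref{lem:crease}'s identity~\eqref{eq:altstress}, namely $c(f,g,X) = [T\conc\es_d]/(\llbracket T\rrbracket\llbracket S\rrbracket)$, to both $\omega_X$ and $\hat\omega_X$ with the same choice of representative on the unchanged side. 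Then the difference $\hat\omega_X - \omega_X$ becomes a single ratio whose numerator is a determinant difference $[\,\cdot\conc\p\,] - [\,\cdot\conc(\text{point of }h(D))\,]$. Since $\p = (\er,\,z_D(\er)+\zeta)$ differs from the point $(\er,z_D(\er))$ on $h(D)$ only in the last coordinate by $\zeta$, expanding the determinant along the last column gives exactly $\zeta$ times a minor equal to $\pm\llbracket S\rrbracket$ (indeed $\pm\llbracket X\conc\p\rrbracket = \pm\llbracket X\conc\er\rrbracket$, the projected ridge-simplex volume scaled appropriately). Dividing through, the $\zeta$-term survives as $-\zeta/|\llbracket S\rrbracket|$, with the sign pinned down by the left/right convention and the fact (from Lemma~\ref{lem:stresssign}) that stacking upward makes the relevant hyperplane tilt so as to decrease the stress on exterior faces.

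The main obstacle will be bookkeeping the signs consistently: which of $h_l,h_r$ gets replaced depends on whether $f_D$ is left or right of $f_X$, the base-face exception inverts the convention, and the $|\cdot|$ in the statement means I must track how $\sgn\llbracket S\rrbracket$ interacts with $\sgn\zeta$ to land on a clean $-\zeta/|\llbracket S\rrbracket|$ regardless of orientation. I would handle this exactly as in the proof of Lemma~\ref{lem:stresssign}: fix the assumption that $f_D$ (hence its sub-ridge $f_X$ after stacking) is, say, right of $f_X$ so $\llbracket X\conc\p\rrbracket<0$, do the computation in that case, and then note that flipping to the other orientation flips both the sign of the creasing (Lemma~\ref{lem:crease}(a)) and the sign of $\llbracket S\rrbracket$, so the two changes cancel and the formula is orientation-independent — which is also why the absolute value appears. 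The only genuinely new ingredient beyond Lemmas~\ref{lem:altz} and~\ref{lem:crease} is the single determinant expansion isolating $\zeta$, which is routine.
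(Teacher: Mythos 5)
Your strategy is essentially the paper's, just routed through the determinant form~\eqref{eq:altstress} instead of the defining formula~\eqref{eq:defcrease}. Both exploit the same two facts: (i) by Lemma~\ref{lem:crease}(b) the old stress $\omega_X$ may be evaluated using $\er=\pi(\p)$ as the reference point on $h(D)$, so that $\llbracket S\rrbracket$ appears in the denominator of both the old and new creasings; and (ii) $z_S(\er)=z_h(\er)+\zeta$, since $\p$ lies on $h(S)$ exactly $\zeta$ above $h(D)$. The paper substitutes (ii) directly into~\eqref{eq:defcrease} and is done in a single chain of equalities, which is slightly cleaner than your determinant route.

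There is one slip in your sketch that should be corrected: when you Laplace-expand the difference $[T\conc\p]-[T\conc(\er,z_D(\er))]$ along the last column (whose only nonzero entry is $\zeta$ in the $z$-row), the surviving cofactor is $\pm\llbracket T\rrbracket$ --- the projected volume of the \emph{unchanged} facet $T$ --- not $\pm\llbracket S\rrbracket$ as you claim. Had the minor really been $\llbracket S\rrbracket$, dividing by $\llbracket T\rrbracket\llbracket S\rrbracket$ would leave $\pm\zeta/\llbracket T\rrbracket$, which is not the asserted formula. With the correct minor $\pm\llbracket T\rrbracket$, the $\llbracket T\rrbracket$ cancels against the denominator of~\eqref{eq:altstress} and one does obtain $-\zeta/\llbracket S\rrbracket$, so your final answer is right but the stated intermediate step is not. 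Your orientation bookkeeping (flipping both the sign of the creasing via Lemma~\ref{lem:crease}(a) and the sign of $\llbracket S\rrbracket$, so that they cancel and the absolute value appears) matches the paper exactly.
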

\begin{proof}
Let  $T$ be the ordered point set, such that $f_T$ and $f_S$ are adjacent after stacking $\p$, and $f_X$
is the intersection of $f_S$ and $f_T$. 
We denote as $h$ the hyperplane that contains the face $f_D$ on which $\p$ was
stacked onto; see Fig.~\ref{fig:ext}. \\
%
%{\bf Case 1 ($f_X$ is not on $f_B$)}:
Assume for now that $f_S$ is left of $f_X$, which implies that $\llbracket S \rrbracket > 0$.
This gives for $\er=\pi(\p)$
\begin{equation*}\label{equ:ext}
\hat \omega_X=c(f_S,f_T,X)=\frac{z_T(\er)-z_S(\er)}{\llbracket S \rrbracket}=
\frac{z_T(\er)-(z_h(\er)+\zeta)}{\llbracket S
\rrbracket}=\omega_X-\frac{\zeta}{\llbracket S \rrbracket}= \omega_X-\frac{\zeta}{\lvert\llbracket S \rrbracket \rvert}.
\end{equation*}
If $f_S$ is however right of $f_X$ we obtain
\begin{equation*}
\hat \omega_X= - c(f_S,f_T,X)=-\frac{z_T(\er)-z_S(\er)}{\llbracket S \rrbracket}=
-\frac{z_T(\er)-(z_h(\er)+\zeta)}{\llbracket S
\rrbracket}=\omega_X+\frac{\zeta}{\llbracket S \rrbracket}= \omega_X-\frac{\zeta}{\lvert\llbracket S \rrbracket \rvert},
\end{equation*}
since $\llbracket S \rrbracket <0$. 
%
%{\bf Case 2 ($f_X$ is on $f_B$)}:
%Again we can assume that $f_S$ is left of $f_T$. Observe that \eqref{equ:ext} also holds in
%case 2. However, this time we have by Lemma~\ref{lem:stresssign} that $\omega_X=(z_T(p)-z_h(p))/\llbracket S \rrbracket<0$.
%But since  $(z_T(p)-z_h(p))$ is negative $\llbracket S \rrbracket $ is again positive, as shown in Fig.~\ref{fig:ext}(b), which proves the lemma.
%%
%If on the other hand  $f_S$ is right of $f_T$ we consider a reordering $Y$ of
%$X$ such that $f_S$ is left of $f_T$ at $Y$. By the
%above computation and Lemma~\ref{lem:stress-orient} we get
%$$ \omega'_X=\omega'_Y=\omega_Y+\frac{\zeta}{\llbracket Y \conc \p
%\rrbracket}=\omega_X+\frac{\zeta}{\llbracket Y \conc \p \rrbracket}.$$
%The lemma follows since $\llbracket Y \conc \p  \rrbracket= - \llbracket X \conc \p  \rrbracket =-\llbracket S \rrbracket$.
 \end{proof}

\begin{figure}[htb]
\begin{center}
%\begin{tabular}{cp{1mm}c}
  \includegraphics[width=.52\columnwidth]{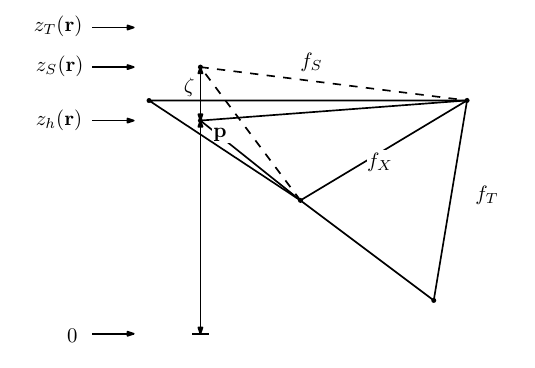} 
%& &
%  \includegraphics[width=.47\columnwidth]{ext2} 
%  \\
%\small{(a)} & & \small{(b)} 
%\end{tabular}
\caption{The situation (for $d=3$) as discussed in Lemma~\ref{lem:ext}: the effect of stacking the vertex $\p$ for an exterior ridge.}%
% not on $f_B$ (a) and on $f_B$ (b).}
  \label{fig:ext}
\end{center}
\end{figure} 

\begin{lem}\label{lem:int}
Assume we have stacked a vertex $\p$ on top of some  face $f_D$. 
Let $X$ be the point set of an interior ridge $f_X$ as defined above.
Furthermore, let $S:=X\conc \es_d$ and $T:=X \conc \te_d$ be the sequence of
vertices of the faces separated by $X$. After the stacking the stress on $X$
equals the positive number 
\[\omega_X:=\zeta  \left \lvert  \frac{ \llbracket D \rrbracket }{\llbracket S
\rrbracket \llbracket T \rrbracket} \right \rvert.\] 
\end{lem}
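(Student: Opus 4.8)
The plan is to compute the stress on an interior ridge directly from the formula for creasing established in Lemma~\ref{lem:crease}, namely the identity
\[
c(f,g,X)=\frac{[T\conc\es_d]}{\llbracket T\rrbracket\,\llbracket S\rrbracket},
\]
using the fact that after the stacking both facets $f_S$ and $f_T$ incident to the interior ridge $f_X$ pass through the new vertex $\p$. First I would set up the configuration: the ridge $f_X$ is created by the stacking, so $X$ consists of $d-2$ of the old vertices of $f_D$ together with... actually, more precisely, $X=Y\conc\p$ where $Y$ is a $(d-2)$-element subset of $D$, and the two facets sharing $f_X$ are $S=Y\conc\p\conc\es$ and $T=Y\conc\p\conc\te$ with $\es,\te$ the two remaining vertices of $D$. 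The key geometric observation is that the vertex $\te_d$ (the vertex of $T$ not in $S$) lies in the hyperplane $h=h(D)$ together with $Y$ and $\es$; equivalently, all of $D$ lies on $h$. The plan is to exploit this to rewrite the determinant $[T\conc\es_d]$ appearing in~\eqref{eq:altstress}.

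The main computational step is to relate $[T\conc\es_d]$ — a $(d+1)\times(d+1)$ determinant in the vertices $Y,\p,\te,\es$ (augmented with a row of ones) — to the quantity $\zeta\cdot\llbracket D\rrbracket$. Here I would use that $\p=(\er,z_D(\er)+\zeta)$, so subtracting off the point $(\er,z_D(\er))$, which lies on $h(D)$, splits the determinant: the "$z_D(\er)$ part" of $\p$'s $z$-coordinate makes $\p$ coplanar with $D$ and contributes nothing (the five points $Y,\te,\es$ together with a point of $h(D)$ are affinely dependent once we remember $Y\cup\{\te,\es\}\subseteq D\subseteq h(D)$), while the remaining "$\zeta$ part" contributes $\zeta$ times the determinant of $Y,\te,\es$ in the projection, i.e.\ $\pm\zeta\llbracket D\rrbracket$. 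Concretely, expanding along the column of $\p$ (or doing a row operation that subtracts a suitable affine combination of the other points' rows from $\p$'s row) reduces $[T\conc\es_d]$ to $\zeta$ times a $d\times d$ minor, which up to sign is $\llbracket Y\conc\te\conc\es\rrbracket = \pm\llbracket D\rrbracket$. Combining with~\eqref{eq:altstress} gives $\omega_X = \pm\,\zeta\,\llbracket D\rrbracket/(\llbracket S\rrbracket\llbracket T\rrbracket)$; since Lemma~\ref{lem:stress-orient} says the value is independent of the ordering of $X$ and the claimed expression is an absolute value, it then suffices to pin down the sign once, which I would do exactly as in Lemma~\ref{lem:altz}: check a single explicit small example (e.g.\ $D$ the standard simplex, $\p$ directly above its centroid with $\zeta>0$) and observe that $\omega_X>0$ there, then argue by continuity/connectedness that the sign cannot change as long as $\p$ stays above $f_D$ and $\zeta>0$.

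The step I expect to be the main obstacle is the bookkeeping of signs and of which sub-sequences play the roles of $X$, $S$, $T$ relative to the old face $f_D$: the ridge $f_X$ is interior, so it is genuinely new, and one has to be careful that $\es_d$ in the statement of the lemma (the last point of $S$, i.e.\ the vertex of $f_S$ not shared with $f_T$) is one of the \emph{old} vertices of $D$, not $\p$. Getting the determinant identity $[T\conc\es_d]=\pm\zeta\,\llbracket D\rrbracket$ right hinges on correctly identifying that exactly one vertex of each of $S$ and $T$ is $\p$ and the rest lie on $h(D)$; once that is clear, the row-reduction is routine. A secondary, purely cosmetic obstacle is that the statement asserts positivity ("equals the positive number"), so after the determinant computation I would explicitly remark that $\zeta>0$ by the standing assumption that stacking increases the $z$-coordinate, and that the stated formula already takes absolute values of the $\llbracket\cdot\rrbracket$ factors, so no further sign discussion beyond the single-example check is needed. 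I would also note in passing (as the paper will want for Lemma~\ref{lem:stresssign}) that this confirms all interior stresses are positive, consistent with the convexity criterion.
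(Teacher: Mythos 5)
Your proposal is correct and follows the same overall structure as the paper's proof: start from the identity $\omega_X=\pm[T\conc\es_d]/(\llbracket T\rrbracket\llbracket S\rrbracket)$ from Equation~\eqref{eq:altstress}, establish $\lvert [T\conc\es_d]\rvert=\zeta\lvert\llbracket D\rrbracket\rvert$, and fix the sign by checking one small/concrete configuration and a continuity argument in $\zeta$. The only (minor) difference is in how the identity $\lvert [T\conc\es_d]\rvert=\zeta\lvert\llbracket D\rrbracket\rvert$ is derived: the paper argues geometrically, viewing $[T\conc\es_d]$ as $d!$ times the volume of a pyramid of height $\zeta$ over the base $\pi(\Delta_D)$, whereas you obtain the same identity algebraically by splitting $\p$'s column via multilinearity into a component on $h(D)$ (which annihilates the determinant) and a $\zeta$-component (which yields $\pm\zeta\llbracket D\rrbracket$ on cofactor expansion) — both are valid and yield the same result.
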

\begin{proof}
Fig.~\ref{fig:int} depicts the situation described in the lemma.
%By Lemma~\ref{lem:stress-orient} we can assume that $f_S$ is left of $f_T$. 
%We assume for now that $f_S$ is left of $f_X$, that is $\llbracket S \rrbracket >0$.
The facet $f_S$ might be either left or right of $f_X$.
Due to Equation~\eqref{eq:altstress} we have
$$ \omega_X= \pm c(f_S,f_T,X)=  \pm \frac{[T\conc \es_d]}{\llbracket T
\rrbracket\llbracket S \rrbracket}.$$
Assume we have doubled $f_D$ and stack with an $\varepsilon$-small $\zeta$ at
the ``top side'' of the induced complex. This will surely generate a convex
$d$-simplex, and by the computations in the proof of 
 Lemma~\ref{lem:stresssign} the sign of the induced $\omega_X$ is positive. Clearly, the sign of $\omega_X$ does not change if we increase the vertical shift.
 
We know that $\lvert [T\conc \es_d] \rvert$ is $d!$ times the
volume of the corresponding simplex. This volume on the other hand can be
expressed as $\zeta \cdot \lvert \mathit{vol}(\pi(\Delta_D)) \rvert/d$, and $\lvert \llbracket D \rrbracket \rvert$ 
equals $(d-1)!$ times the volume of $\pi(\Delta_D)$. Hence
we have $\lvert [T\conc \es_d] \rvert= \zeta \cdot \lvert \llbracket D
\rrbracket \rvert$ and the statement of the lemma follows.  
\end{proof}
\begin{figure}[htb]
\begin{center}
  \includegraphics[width=.47\columnwidth]{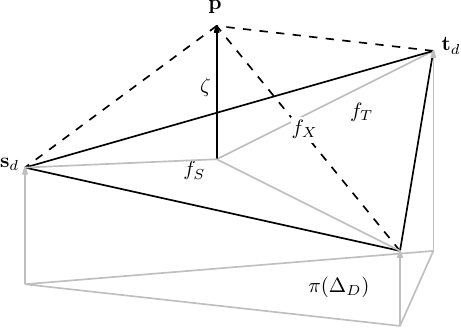} 
\caption{The situation (for $d=3$) for an interior ridge as discussed in
Lemma~\ref{lem:int}.}
  \label{fig:int}
\end{center}
\end{figure}

\subsection{The tree-representation of a stacked polytope}
\label{subsec:treerep}

The combinatorial structure of a polytope is typically encoded by its face lattice.
If the polytope is a stacked polytope, we can also describe its structure by
``recording'' in which way we 
have carried out the stacking operations. The most natural way to keep track of
the stacking operations is an ordered rooted tree. Let $P$ be a stacked polytope, then
the \emph{tree-representation} of $P$ is a $d$-ary tree $\mathcal{T}(P)$, which is
defined as follows: The leaves of the trees are in one-to-one correspondence to
the facets of $P$, with the exception of the face $f_B$, which is not present in the
tree. Interior nodes correspond to $d$-cliques in the graph of $P$ that used to be
facets at some point during the stacking process.\footnote{
When considering an intermediate configuration in the 
stacking process, we refer to the $d$-cliques of $P$ that are faces in the intermediate 
polytope also as \emph{facets}, although they are not necessarily facets of $P$.}
 The root represents the initial
copy of $f_B$ in the beginning. The children of a node $v$ represent the faces
that were introduced by stacking a vertex onto the face associated with $v$. To 
reproduce the combinatorial structure from $\mathcal{T}(P)$ we 
fix an ordering for the edges emanating from an interior node $v$, such that it
is possible to map the children of $v$ to the faces generated by  stacking onto
$v$'s face in a unique way. By this the
mapping between the faces of~$P$ and the nodes of~$\mathcal{T}(P)$ can be reproduced. If $P$ has~$n$ vertices, then $\mathcal{T}(P)$ has $(n-d)(d-1)+1$ leaves and $n-d$ interior nodes.

We will use the tree-representation of $P$ to specify the geometry of the flat embedding of
$P$. This can be achieved by assigning for every leaf $v$ of $\mathcal{T}(P)$ a
rational number (weight), which corresponds to the volume of $v$'s face in the flat
embedding of $P$. More precisely, the weight specifies the volume of $v$'s face
times $(d-1)!$. To emphasize this relationship, we call the weights \emph{face-weights}.
We extend the \fw assignments for the interior nodes by
summing for every node the \fws of its children recursively up. After we have determined
the location of $f_B$, such that its volume is $(d-1)!$ times the \fw of the root of
$\mathcal{T}(P)$, the coordinates of the whole flat embedding are determined.
In particular, for every stacking operation, the (normalized) \fws specify the
location of the stacked vertex since they denote its barycentric coordinates.
Thus, by traversing $\mathcal{T}(P)$ we can produce the flat embedding
incrementally in a unique way.

\section{The embedding algorithm}
\label{sec:algo}
We assume that the combinatorial structure of the 
stacked polytope is given in form of its tree-representation
$\mathcal{T}(P)$. The embedding algorithm works in three steps. First we
generate the \fws for $\mathcal{T}(P)$ and fix the coordinates for
the face $f_B$. This will give us a flat embedding of the polytope. In the next
step we ``lift'' the polytope, by defining for every vertex $v_i$ the
\emph{vertical shift} $\zeta_i$. Assume that we have stacked $v_i$ onto the
face $f$. By
construction we pick always positive vertical shifts. By carefully choosing the
right vertical shifts we obtain an embedding of $P$ as a \emph{convex} polytope, 
however this embedding
does not fit on a polynomial grid. In the final step we round the 
points to appropriate grid points, while maintaining convexity.

\subsection{Balancing \fws}
We apply a technique from data structure analysis called the \emph{heavy
path decomposition} (see Tarjan~\cite{T83}). Roughly
speaking, it decomposes a tree into paths, such that the induced
hierarchical structure of the decomposition is balanced. We continue with a brief
review of the heavy
path decomposition.
Let $u$ be a non-leaf of a rooted tree $T$ with root $r$ ($r=u$ is possible). We denote by $T_u$ the
subtree of $T$ rooted at $u$. Let $v$ be the child of $u$ such that $T_v$ has
the largest number of nodes (compared to the subtrees of the other children of
$u$), breaking ties arbitrarily. We call the edge $(u,v)$ a \emph{heavy edge},
and the edges to the other children of $u$ \emph{light edges}. The heavy edges
induce a decomposition of $T$ into paths, called \emph{heavy paths},
and light edges; see Fig.~\ref{fig:hetd}. 
The node on a heavy path that is closest to the root is called its \emph{top node}.
We call a heavy path with its
incident light edges (ignoring the possible edge from its top node to its parent) a \emph{heavy caterpillar}. The heavy
path decomposition decomposes the
edges of $T$ into heavy caterpillars. We say that two heavy caterpillars are
adjacent if their graphs would be adjacent subgraphs in $T$. This
adjacency relation induces a hierarchy,  which we represent as a rooted tree
$\mathcal{H}(T)$. The nodes in $\mathcal{H}(T)$ are the heavy caterpillars, and its edges
represent the adjacency relation between caterpillars. The root of 
$\mathcal{H}(T)$ is the heavy caterpillar that includes the root of $T$. 
When $(u,v)$ is a light edge and $u$ is the parent of $v$ then the size of $T_v$
is at most half as big as the size of $T_u$.
Hence, every root-leaf path in $T$ can visit at most $\log t$
 light edges for $t$ being the size of~$T$. Fig.~\ref{fig:hetd} shows an
example of a tree-representation and its associated hierarchy as a tree.
\begin{figure}[htb]
\begin{center}
\begin{tabular}{cp{.1cm}c}
  \includegraphics[width=.46\columnwidth]{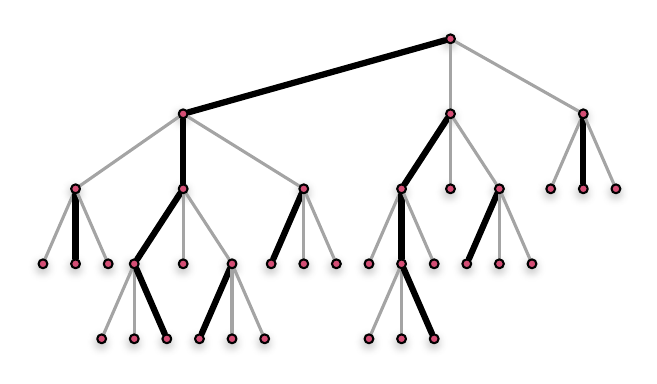} & &
  \includegraphics[width=.46\columnwidth]{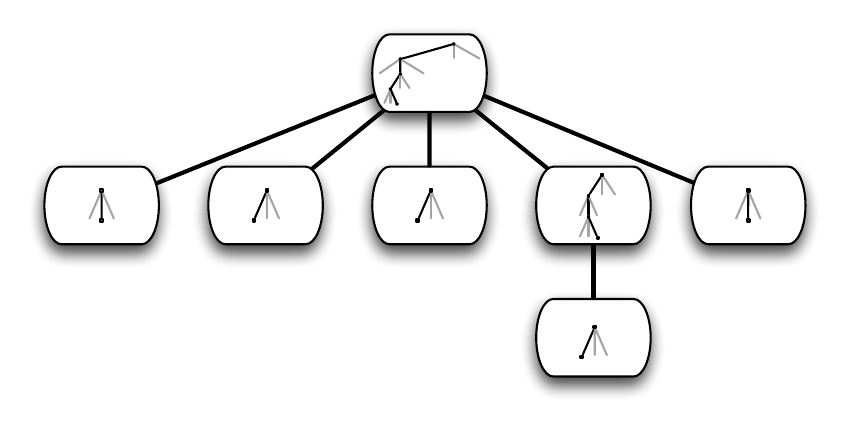} 
  \\
\small{(a)} & & \small{(b)} 
\end{tabular}
\caption{A tree-representation $\mathcal{T}$ of a stacked 3-polytope (a) and the
corresponding hierarchy based on the caterpillars as tree
$\mathcal{H}(\mathcal{T})$ (b).}
  \label{fig:hetd}
\end{center}
\end{figure} 

The assignment of the \fws is guided by the heavy
path decomposition of
$\mathcal{T}(P)$. We call a node $v$ \emph{balanced} if (1) the \fws of $v$'s light-edge children 
are all the same and (2) the \fw of the heavy edge child is not smaller than the \fw of every
light edge child. If every node is balanced we call the \fw assignment balanced. An example of
a balanced tree-representation is depicted in Fig.~\ref{fig:balancedexample}.

\begin{lem}\label{lem:balance}
We can find a balanced set of integer \fws for $\mathcal{T}(P)$ such that the
\fw associated with the root is at most $dn^{\log 2d}$ and no \fw
is less than 1. Here, $n$ denotes the number of vertices in the corresponding graph of the stacked polytope.
\end{lem}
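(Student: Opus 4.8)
The plan is to let the construction be driven by a bottom-up ``demand'' function on $\mathcal{T}(P)$ and then to pass the demand of the root down to the leaves. For a node $v$ write $\nu(v)$ for one plus the number of interior nodes of the subtree $T_v$; thus $\nu$ equals $1$ on leaves, $\nu(\mathrm{root})=n-d+1\le n$ (since $\mathcal{T}(P)$ has $n-d$ interior nodes, one per stacking operation), and $\nu(c_1)\ge \nu(c_i)$ whenever $c_1,\dots,c_d$ are the children of an interior node listed so that $c_1$ is the heavy-edge child (its subtree has the most nodes, hence the most interior nodes). Define $m$ by $m(v):=1$ for a leaf and, for an interior node with children $c_1,\dots,c_d$ as above,
\[
 m(v):=\max\{m(c_1),\ell^{\ast}\}+(d-1)\,\ell^{\ast},\qquad \ell^{\ast}:=\max_{2\le i\le d}m(c_i);
\]
all $m(v)$ are positive integers. (Intuitively $m(v)$ is the least \fw one may place at $v$ in a balanced assignment of $T_v$ with all leaf weights $\ge 1$.) The lemma follows once we show (i) that assigning weight $m(\mathrm{root})$ to the root extends to such a balanced assignment, and (ii) that $m(\mathrm{root})\le n^{\log(2d)}$.

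For (i) I would set $a(\mathrm{root}):=m(\mathrm{root})$ and proceed top-down: at an interior node $v$ whose weight $a(v)\ge m(v)$ has already been fixed, give every light child the weight $\ell^{\ast}$ and the heavy child the weight $a(v)-(d-1)\ell^{\ast}$. This last quantity equals $\max\{m(c_1),\ell^{\ast}\}$ plus the slack $a(v)-m(v)\ge 0$, so it is at least $\ell^{\ast}$ (which gives balance at $v$) and at least $m(c_1)$, while each light child carries $\ell^{\ast}\ge m(c_i)$, so the induction continues to every child; the children's weights sum to $a(v)$ as required for \fws, and at a leaf the assigned weight is $\ge m(v)=1$. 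Interior \fws, being sums of leaf \fws, are then also $\ge 1$. (In fact all weights produced this way are positive integers.)

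Claim (ii) is the heart of the matter; I would prove the invariant $m(v)\le \nu(v)^{p}$ for all $v$, where $p:=\log(2d)$, by induction on $T_v$, the base case being $1=1^{p}$. For an interior node, the induction hypothesis together with $\nu(c_1)\ge\nu(c_i)$ gives
\[
 m(v)\ \le\ \nu(c_1)^{p}+(d-1)\Bigl(\max_{2\le i\le d}\nu(c_i)\Bigr)^{p},
\]
and since $\nu(v)\ge \nu(c_1)+\max_{2\le i\le d}\nu(c_i)$ it suffices to verify the elementary inequality $x^{p}+(d-1)y^{p}\le (x+y)^{p}$ for all reals $x\ge y\ge 0$. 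Dividing by $x^{p}$ and setting $t=y/x\in[0,1]$, this becomes $\psi(t):=(1+t)^{p}-1-(d-1)t^{p}\ge 0$; here $\psi(0)=0$ and $\psi'(t)=p\bigl[(1+t)^{p-1}-(d-1)t^{p-1}\bigr]\ge 0$ on $(0,1]$ because $(1+1/t)^{p-1}\ge 2^{p-1}=d\ge d-1$. Evaluating the invariant at the root and using $\nu(\mathrm{root})\le n$ yields $m(\mathrm{root})\le n^{\log(2d)}$.

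The crux, I expect, is choosing the right size measure and exponent: counting interior nodes (rather than leaves or all nodes) is exactly what makes $\nu(\mathrm{root})\le n$, and the exponent $p=\log(2d)$ is calibrated precisely so that $2^{p-1}=d$, which is the inequality needed to keep $\psi'$ nonnegative on $[0,1]$ (any smaller exponent already fails near $t=1$). The combinatorial bookkeeping for $\mathcal{T}(P)$, the top-down redistribution, and the estimate on $\psi$ are then routine.
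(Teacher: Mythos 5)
Your proof is correct, but it takes a genuinely different route from the paper's. The paper explicitly constructs the hierarchy of heavy caterpillars $\mathcal{H}(\mathcal{T}(P))$ and inducts on the height of that hierarchy, performing an explicit bottom-up pass that first balances the light subtrees hanging off the topmost heavy path, then iteratively ``charges'' increments $\delta$ (and a final $\delta_r$) along the path to repair imbalances, showing that each level of the hierarchy multiplies the root weight by at most a factor of $d$ and using the fact that the hierarchy has $O(\log n)$ levels. You sidestep the caterpillar hierarchy entirely: you keep only the heavy/light distinction among children, define the minimum balanceable weight $m(v)$ by a single bottom-up recursion, verify in (i) that any $a(\mathrm{root})\ge m(\mathrm{root})$ propagates down consistently, and in (ii) reduce the quantitative bound to the one-line analytic inequality $x^p+(d-1)y^p\le(x+y)^p$ ($x\ge y\ge 0$, $p=\log 2d$) applied along the recursion with the potential $\nu(v)$. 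This is a cleaner and more self-contained argument, and it also makes integrality of all weights immediate. Two small points worth tightening: first, the step $\nu(c_1)\ge\nu(c_i)$ (heavy child has the most interior nodes) is used without comment; it holds because $\mathcal{T}(P)$ is a full $d$-ary tree, so interior-node count is a strictly increasing function of total node count, and you should say so. Second, your closing parenthetical overclaims: $2^{p-1}=d$ is what makes $\psi'$ nonnegative on all of $(0,1]$, but nonnegativity of $\psi$ itself does not require $\psi'\ge 0$ everywhere. Indeed, taking $p=\log d$ gives $\psi(0)=\psi(1)=0$ with $\psi'$ changing sign exactly once in $(0,1)$, so $\psi\ge 0$ still holds; the exponent $\log(2d)$ is what your particular monotonicity argument needs, not the smallest exponent for which the inequality (and hence the lemma with a correspondingly smaller bound) is true.
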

\begin{proof}
Let $H=\mathcal{H}(T)$ be the tree representing the hierarchy of the
heavy caterpillars for some $d$-ary tree~$T$. The height of~$H$ is the length (number of edges) of its longest root-leaf path.
For~$T$ we call the height of $\mathcal{H}(T)$ its $\emph{hpd-height}$.

%We start with setting the \fws
%of all leaves in $\mathcal{T}(P)$  to 1 and adjust the \fws for the interior
%vertices accordingly. This guarantees that the nodes in $H$ with height zero
%are balanced, however, other nodes might not be balanced. At this moment, the \fw of the root equals
%the number of leaves in $\mathcal{T}(P)$, which is the number of facets of $P$
%minus 1. Therefore, the root has at this point a \fw of $1+(n-d)(d-1)<n d$.
To prove the lemma we use the following
claim. Since the proof of the claim will be constructive it will also induce a strategy of how to set the \fws.
\begin{description}
\item[\bf Claim:] For all $k\ge 0$ the following holds: If $T$ is a $d$-ary tree with $m$ leaves and hpd-height~$k$ then we can find
a set of balanced \fws, such that the  \fw of the root is at most $m d^k$.
\end{description}

We prove the claim by induction on~$k$.
A tree with hpd-height~0 is a heavy caterpillar, whose subtrees on the light edges are leaves.
In this case it suffices to set the \fws of all leaves to~1 and then propagate the \fws to the interior nodes.
Clearly, this \fw assignment is balanced and the \fw of the root equals the number of leaves. 
Thus, the claim holds for this (base) case.

We continue with the induction step.
Assume that the claim holds for all trees with hpd-height less then $k$. We can now balance the \fws of
any tree of hpd-depth $k$ as follows (see also Fig.~\ref{fig:balance} for an illustration):
 Let the heavy path incident to the root be $h$. The subtrees connected to $h$
via a light edge have all depth less than $k$. For these subtrees we use our induction hypothesis and (recursively)
determine the corresponding face weights and propagate these weights to $h$. If a subtree is a single leaf it will be assigned with a \fw of~1.
We are left with balancing the nodes of $h$. So let $v$ be a node on $h$
and let $u^+_v$ be one of its light edge children with the largest \fw. We will now make each light child $u\neq u_v^+ $ as ``heavy''
as $u_v^+$. To do so, we determine the difference $\delta$ between the \fw of $u^+_v$ and the \fw of~$u$. Next, we increase the \fw of every node on the heavy path with top node $u$ by $\delta$.
Notice that this keeps the \fws in the subtree rooted at $u$ balanced. To make the updates consistent we also increase
the \fw on $h$ from $v$ to its root by $\delta$.
We continue with the other children of $u$ in the same fashion and then repeat this for every node on $h$.
We also have to guarantee that 
the \fw of every light edge child is not larger than the \fw of its heavy edge child. To ensure this, we
 increase the \fw of the leaf of $h$ by $\delta_r$ and propagate this weight along~$h$, where $\delta_r$ is the largest \fw of one of the 
light edge subtrees hanging off of $h$. Note that this is done only once for the heavy path~$h$.
 By this we have obtained a balanced set of \fws for $T$. 

We are left with bounding the new \fw of the root. Let $S$ be the sum of the \fws of all the light edge subtrees
of $h$ before balancing the nodes of $h$, which is also the \fw of the root minus 1 at this time. 
By the induction hypothesis we have $S\le md^{k-1}$.
When balancing the light subtrees on a vertex $v$ we increased the \fw of the root by some $\delta$
at most $d-2$ times. The ``charge'' $\delta$ was not larger than the \fw of the corresponding $u_v^+$. 
The total increment at this stage is therefore less than $(d-2)S$. The final increment by $\delta_r$ is at most $S-1$.
Therefore, we have that the \fw of the root is at most 
$$(S+1)+(d-2)S+(S-1)=dS\le d m^k$$ 
and the proof of the claim follows.

The statement of the lemma follows now from the claim. According to Subsec.~\ref{subsec:treerep} the tree $\mathcal{T}(P)$ has less than $(d-1)n$ leaves. When traversing a light edge $(u,v)$ the number of interior 
nodes is at least halved. The tree~$\mathcal{T}(P)$ has less than $n$ interior nodes. Thus, every path in~$\mathcal{T}(P)$ from the top node of a heavy path to the root of~$\mathcal{T}(P)$ visits at most $\log n$
light edges. As a consequence, the maximal hpd-height  
is bounded by $ \log n$. The 
total \fw of the root is therefore at most $(d-1)n \cdot
 d^{\log n} < 
 dn \cdot
 n^{\log d} = 
 d n^{\log 2d}$. 
\end{proof}

%
%We are now going to balance the \fws. This is done in several rounds.
%Starting with $i=1$ we balance in round~$i$ all $\mathcal{T}(P)$-nodes on caterpillars of height~$i$,
%and then continue with the next round. Caterpillars of the same 
%height are vertex-disjoint. Thus, in each round we can process the caterpillars in any order.
%
%Let us now consider a node with height~$i$ in $H$ and its associated caterpillar
%$C$. For every node on the heavy path of $C$ we do the following (again the order is not relevant). 
%We check
%whether the \fws of its (light edge) children are the same. If this is not the case
%we increase the \fws of
%the children appropriately.
%Notice that if we increase the \fw of a node we have to propagate
%this change throughout the tree (upwards and downwards). However this can be
%done without changing too many weights as follows. Let $u$ be a node in $T$ for which we
%increase the \fw by $\delta$ and let the parent of $u$ be $v$. In our setting
%$(u,v)$ is always a light edge. To keep the \fws consistent we take
%the heavy path that contains $u$ as root and increase the weights of every
%vertex on this path by $\delta$. This makes the subtree rooted at $u$ consistent
%with the new  \fw of $u$. Now we are left with propagating the \fw ``upwards''.
%This is achieved by increasing all \fws along the path from $v$ to the root
%of $\mathcal{T}(P)$. Although we have increased many \fws, the total \fw
%of the root went only up by $\delta$. Notice that if the subtree of $u$ was
%balanced, it is still balanced after the adjustments.
%

\begin{figure}[htb]
\begin{center}
\begin{tabular}{cp{1cm}c}
  \includegraphics[scale=.7]{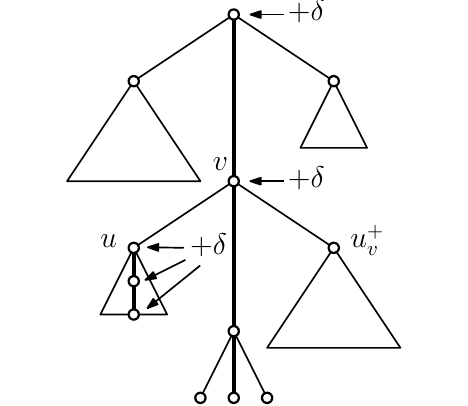} & &
    \includegraphics[scale=.7]{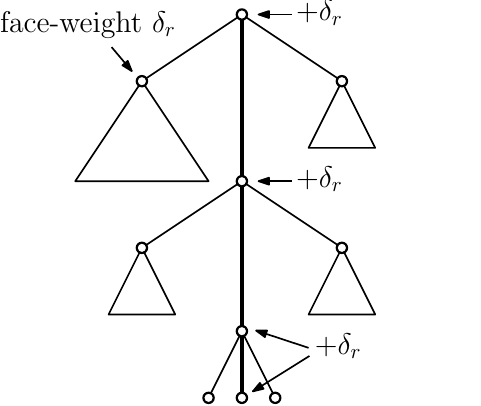} 
  \\
\small{(a)} & & \small{(b)} 
\end{tabular}
\caption{Modifications of the \fw as done in the proof of Lemma~\ref{lem:balance}. (a) Sketch how to make all light children have the same \fw. (b) Sketch how to 
keep the heavy edge child the child with largest \fw.
}
  \label{fig:balance}
\end{center}
\end{figure} 
In the following we denote
the \fw of the root as $R$. 
To finish the definition of the flat embedding we fix the shape of $f_B$ as
follows: One of the vertices of $B$ lies at the origin. We set
$L=\sqrt[d-1]{ R}$ and place the  other vertices of $B$ at 
$L\cdot \mathbf{e}_i$, for $ \mathbf{e}_i$ being a vector of the standard basis
of $\mathbf{R}^{d-1}$, such that $B$ spans the $(d-1)$-simplex $\Delta_B$. By
construction the volume of the simplex equals $1/(d-1)!$ times the \fw of the
root of $\mathcal{T}(P)$. Without loss of generality we also assume for the remainder 
that $\llbracket B \rrbracket >0$.
It follows  that $$\llbracket B \rrbracket =R\le dn^{\log (2d)}. 
$$ 
%For our applications we have that $R\ge 27$.
By our choice of $f_B$ the volume of 
every face in the flat embedding equals $(d-1)!$ times  the volume of its \fw.
\begin{figure}[htb]
\begin{center}
%\begin{tabular}{cp{.01cm}c}
  \includegraphics[width=.6\columnwidth]{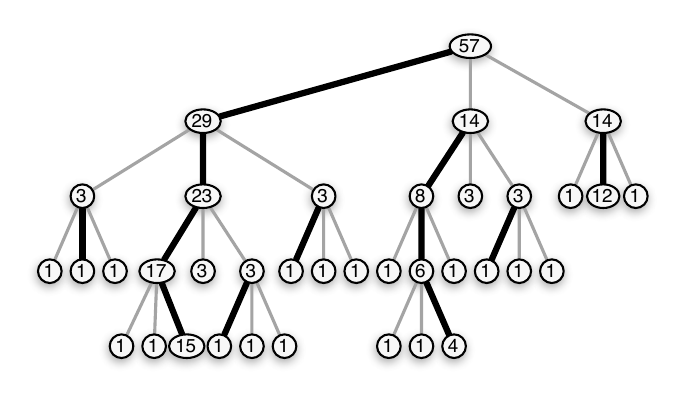} %& &
%  \includegraphics[width=.33\columnwidth]{flatexample} 
%  \\
%\small{(a)} & & \small{(b)} 
%\end{tabular}
\caption{The tree-representation of Fig.~\ref{fig:hetd} with balanced \fws, denoted in the nodes.
}
  \label{fig:balancedexample}
\end{center}
\end{figure}

\subsection{Assigning heights}
After we have assigned the \fws we specify the heights of the lifting by determining
the vertical shift $\zeta_i$ for every vertex $v_i$ based on the 
 \fws. Let us assume vertex $v_i$ was stacked as $\p$
 onto some face $f_D$. Let the boundary of $f_D$ be formed by the 
 ridges $f_{X_1},\ldots, f_{X_d}$. The stacking introduces the new facets
$f_{Y_1},\ldots, f_{Y_d}$, where for $1\le j \le d$ we have $Y_j:=X_j\conc \p$. The
nodes of the faces $f_{Y_j}$  in
$\mathcal{T}(P)$ that are connected to the node of $f_D$ via a light edge
have all the same \fw. Let this
weight be $B_i$. The only (possible) other \fw for a node 
associated with one of the faces $f_{Y_j}$ is denoted
by $A_i$. 
We set as the vertical shift for the vertex $v_i$
\begin{equation}\label{eq:zetadef}
\zeta_i:=A_i \cdot B_i.
\end{equation}
Note that $A_i \ge B_i$, since the \fws were balanced. 

%This was certainly true for the initial \fw given before round~1.
%The balancing process maintains this property, since for every node no \fw of a light-edge children is increased over
%the \fw of the heavy-edge child.

\begin{lem}\label{lem:stressvalues}
The embedding induced by the \fws and the vertical shifts $\zeta_i$
guarantees:
\begin{itemize}
\item[1.)] For every ridge $f_X$ not on $f_B$ we have $\omega_X\ge1$.
\item[2.)] For every ridge $f_X$ on $f_B$ we have $0>\omega_X\ge -R\ge -d\cdot
n^{\log (2d)}.$
\end{itemize}
\end{lem}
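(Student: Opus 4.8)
The plan is to track how the stresses evolve step by step through the stacking process, using Lemmas~\ref{lem:ext} and~\ref{lem:int} as the engine, and then to control the accumulated sums via the balancedness of the \fws (Lemma~\ref{lem:balance}) together with the explicit formula $\zeta_i = A_i B_i$. The key accounting identity is that, in the flat embedding, the volume of a facet $f_S$ equals $1/(d-1)!$ times its \fw, so $\bigl\lvert \llbracket S \rrbracket \bigr\rvert$ equals exactly the \fw stored at the corresponding node of $\mathcal{T}(P)$. This turns all the determinant ratios appearing in Lemmas~\ref{lem:ext} and~\ref{lem:int} into ratios of \fws, which are the quantities the balancing step was designed to make nice.

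\medskip
\noindent\textbf{Part 1 (interior ridges, $\omega_X \ge 1$).} When a vertex $v_i$ is stacked on $f_D$, an interior ridge $f_X$ is created with $S = X\conc\es_d$, $T = X\conc\te_d$ two of the new facets $f_{Y_j}$. By Lemma~\ref{lem:int}, at the moment of creation
\[
\omega_X = \zeta_i \left\lvert \frac{\llbracket D\rrbracket}{\llbracket S\rrbracket \llbracket T\rrbracket}\right\rvert
 = A_i B_i \cdot \frac{(\text{\fw of }f_D)}{(\text{\fw of }f_S)\,(\text{\fw of }f_T)}.
\]
The \fw of $f_D$ is the sum of the \fws of its children, hence at least the \fw of any one child; and $\{$\fw of $f_S$, \fw of $f_T\}\subseteq\{A_i,B_i\}$ with $A_i\ge B_i$. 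A short case check (both children light, so both \fws equal $B_i$; or one heavy with \fw $A_i$) shows the fraction is at least $1/(A_i B_i)$, giving $\omega_X\ge 1$ at creation. It remains to argue the stress on $f_X$ never decreases afterwards: a later stacking can only affect $\omega_X$ if it happens on $f_S$ or $f_T$, i.e.\ $f_X$ becomes an exterior ridge of that later operation, and by Lemma~\ref{lem:ext} each such event changes $\omega_X$ by $-\zeta_j/\lvert\llbracket S'\rrbracket\rvert$ — but wait: here I need the correct sign. The subtlety is that the orientation (left/right) of the affected facet relative to $f_X$ may be the opposite of the generic case, so the contribution is $+\zeta_j/\lvert\cdot\rvert$ and the stress \emph{grows}. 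I would verify this by recalling that in Lemma~\ref{lem:ext} the stacked facet $f_S=X\conc\p$ replaces $f_D$ on the side of $f_X$ \emph{away} from the rest of the old polytope, so for an already-interior ridge the roles are flipped; the net effect is monotone increase. Hence $\omega_X\ge 1$ persists.

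\medskip
\noindent\textbf{Part 2 (ridges on $f_B$, $0 > \omega_X \ge -R$).} The ridges on $f_B$ are exactly the $d$ ridges of the base simplex $\Delta_B$; each is created at the very first stacking (of $v_1$ onto the copy of $f_B$) and is thereafter only ever an exterior ridge of subsequent stackings. For the first stacking, Lemma~\ref{lem:int} does not apply (these are the ridges of $f_B$ itself, not newly interior ones) — instead I would compute $\omega_X$ directly from the definition: $f_B$ lies in $z=0$ and the adjacent facet $f_{Y_j}$ has apex at height $\zeta_1$, so $c(h_l,h_r,X) = -\zeta_1/\lvert\llbracket B\rrbracket\rvert$ with the sign fixed by the "exception for $f_B$" convention, giving a negative value of absolute size $\zeta_1/R$. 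Then each later stacking on one of the facets incident to $f_X$ subtracts a further $\zeta_j/\lvert\llbracket S_j\rrbracket\rvert$ by Lemma~\ref{lem:ext} (now with the generic sign, since these facets are on the "inside"), so $\omega_X$ only becomes more negative, staying $<0$. For the lower bound I bound
\[
\lvert \omega_X\rvert \;\le\; \sum_j \frac{\zeta_j}{\lvert\llbracket S_j\rrbracket\rvert}
\;=\;\sum_j \frac{A_j B_j}{(\text{\fw of }f_{S_j})},
\]
where $f_{S_j}$ is the facet onto which $v_j$ was stacked that lies along $f_X$, whose \fw is the sum of its children's \fws, hence $\ge A_j$. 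So each term is $\le B_j$. Summing the $B_j$ over one root-to-leaf-ish chain of stackings feeding this ridge: because the \fws along a heavy path are nested and the light-edge contributions telescope, this sum is dominated by the \fw of the root, $R$. This is where balancedness is essential — without it the $B_j$ could sum to something exponential. Concluding, $\lvert\omega_X\rvert\le R\le d\,n^{\log(2d-2)}\le d\,n^{\log 2d}$.

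\medskip
\noindent The main obstacle I anticipate is getting the signs exactly right in the monotonicity argument: I must confirm that for an already-existing ridge $f_X$, a later stacking on an incident facet $f_S$ always attaches the new simplex on the side for which Lemma~\ref{lem:ext}'s "right of $f_X$" branch (or, for a $f_B$-ridge, the "left" branch) applies, so that interior stresses only increase and $f_B$-stresses only decrease in magnitude toward $-R$. The second, more quantitative, obstacle is the telescoping bound $\sum_j B_j \le R$: I would make this precise by charging each $B_j$ to the increment it represents in the root \fw during the construction of Lemma~\ref{lem:balance}, so that the sum is literally bounded by the final root weight $R$.
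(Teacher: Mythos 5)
Your proposal contains a critical error in Part 1. You assert that after an interior ridge $f_X$ is created, later stackings can only \emph{increase} $\omega_X$, inventing a sign flip by appealing to the orientation of the newly attached simplex. This is wrong: Lemma~\ref{lem:ext} already handles both left/right orientations in its proof and concludes \emph{unconditionally} that $\hat\omega_X = \omega_X - \zeta/\lvert\llbracket S\rrbracket\rvert$, so every subsequent stacking on a facet incident to $f_X$ strictly \emph{decreases} the stress. The paper's proof is explicit on this point. Because of this error you never face the real difficulty, which is to bound the cumulative decrement. In the paper, the initial value is shown to be at least $\lvert\llbracket D\rrbracket\rvert$ (not merely $\ge 1$), and then each decrement $C_X(k)=\zeta_k/\lvert\llbracket S_k\rrbracket\rvert$ is shown to be exactly $A_k$ or $B_k$, and is charged to a newly created facet $f_{Y_k}$ whose projection has interior disjoint from the other charged facets and is properly contained in $\pi(\Delta_D)$. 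Since these charged facets can never exhaust $\pi(\Delta_D)$, at least one unit of volume survives, giving $\omega_X\ge 1$. Your proof has no mechanism for this accounting because the (false) monotone-increase claim makes it appear unnecessary.

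Part 2 also has gaps. You write $f_{S_j}$ as ``the facet onto which $v_j$ was stacked,'' but in Lemma~\ref{lem:ext} the set $S_j = X\conc\p_j$ is a \emph{newly created} facet, not the parent $f_{D_j}$. Hence $\lvert\llbracket S_j\rrbracket\rvert\in\{A_j,B_j\}$, and your inequality $\lvert\llbracket S_j\rrbracket\rvert\ge A_j$ does not hold; in fact $\zeta_j/\lvert\llbracket S_j\rrbracket\rvert$ equals whichever of $A_j$ or $B_j$ is \emph{not} $\lvert\llbracket S_j\rrbracket\rvert$, so your claim that each term is $\le B_j$ is false (it could be $A_j$). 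Moreover, your proposed ``telescoping along heavy paths'' bound $\sum_j B_j\le R$ is never made precise and I do not see that it can be: the sum is over stackings feeding a single boundary ridge, which need not correspond to a heavy-path structure. The paper instead bounds the total decrement geometrically: the charged facets have pairwise disjoint interiors and are all contained in $\pi(\Delta_B)$, so their projected volumes sum to at most $\llbracket B\rrbracket = R$. This disjoint-interior argument is the missing idea in your proposal, for both parts.
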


\begin{proof}
% faces bezeichnen auch faces irgendwann mal, wo soll das erwhnt werden?
We first bound the stresses on faces not on $f_B$.
Recall that by construction the \fw of a facet $f_D$ coincides with
$|\llbracket D \rrbracket |$.
% In this proof volumes of faces are referring
%to the volumes in the flat embedding. 
We study how the stress on $\omega_X$ evolves during the stacking process. The 
initial value of $\omega_X$ is assigned by some stacking operation
that introduced $f_X$.
We assume that this stacking operation stacked a vertex on the face $f_D$. 
The \fws of the new facets introduced
by the stacking are all the same, namely  $B_i$, except for one possible larger 
\fw, namely $A_i$. Let $f_S$ and $f_T$ be the two facets incident to $f_X$ such
that $| \llbracket S\rrbracket | \ge | \llbracket T\rrbracket | $.
By Lemma~\ref{lem:int}, we have that at this moment
$$\omega_X = \zeta_i \left\lvert  \frac{\llbracket D \rrbracket} {\llbracket S\rrbracket  \llbracket T\rrbracket}   \right\rvert 
= A_i B_i \left\lvert  \frac{\llbracket D \rrbracket} {\llbracket S\rrbracket \llbracket T\rrbracket}   \right\rvert  \ge \left\lvert \llbracket D \rrbracket \right\rvert, $$
since $|\llbracket S \rrbracket| \le A_i$ and $| \llbracket T \rrbracket |= B_i$.

This positive initial stress decreases when stacking on a facet that has 
$f_X$ on the boundary. So assume we stacked $\p_k$ on such a facet. Let $C_X(k)$ denote the amount of the decrement  due to this stacking.  
By Lemma~\ref{lem:ext} we have  $C_X(k)=\zeta_k/ \lvert \llbracket S_k \rrbracket \rvert$, 
where $S_k=X\conc \p_k$. Recall that $\zeta_k=A_kB_k$, where $A_k$ and $B_k$ are the two different values of
\fws of faces introduced by stacking $\p_k$. It follows that $|\llbracket S_k \rrbracket|\in\{A_k,B_k\}$, 
and therefore $C_X(k)=\{A_k,B_k\} \setminus \{ |\llbracket S \rrbracket |\}$ equals either $A_k$ or $B_k$, and hence, is an integer. 
We charge the value of $C_X(k)$ to a  \fw of a face $f_{Y_k} \neq f_{S_k}$ that was introduced when stacking 
$\p_k$. Stacking operations that
decrease $\omega_X$ further, stack either onto $f_{S_k}$, or onto the opposite facet incident to $f_X$ (see Fig.~\ref{fig:propagation}).
As a consequence, the projected 
facets $\{\pi(f_{Y_j}) \mid \text{stacking of  $\p_j$ decrements $\omega_X$} \}$ have disjoint interiors and are properly contained inside $\pi(\Delta_D)$.
Since it is
not possible to cover $\pi(\Delta_D)$ with these faces completely, 
the difference $\left\lvert \llbracket D \rrbracket \right\rvert- \sum_j C_X(j)$ is at least 1 
and therefore $\omega_X\ge \left\lvert \llbracket D \rrbracket \right\rvert- \sum_j C_X(j)\ge 1$. 
%Hence,  after all stacking operations the stress $\omega_X$ is at least 1.
\begin{figure}[htb]
\begin{center}

  \includegraphics[width=.45\columnwidth]{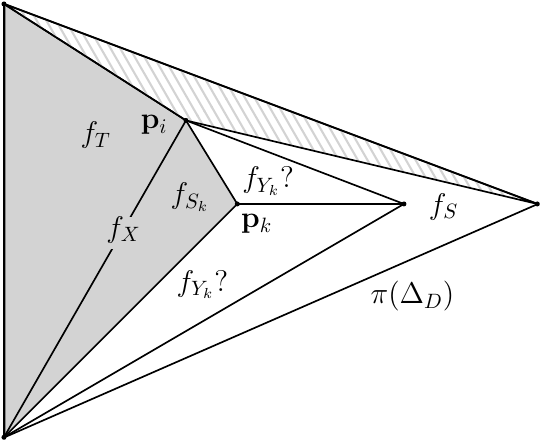} 
\caption{When stacking $\p_k$ the stress on $X$ is decreased by at most the \fw charged to $f_{Y_k}$. 
There are two candidates for $f_{Y_k}$. All further negative charges will be attributed to ``volumes'' contained 
inside $S_k$ and $T$. The face with the crossed-line pattern will never be charged to $\omega_X$.}
  \label{fig:propagation}
\end{center}
\end{figure}

For the stresses on the boundary we can argue as follows. The stress on $X$ 
is  a combination of several negative charges, but without 
having an initial positive charge. By the above argument, the negative charges
 are attributed to faces with disjoint interiors. All these 
faces are contained inside $\pi(\Delta_B)$ and hence $|\omega_X|$ is at 
most $\llbracket B \rrbracket $. By Lemma~\ref{lem:balance} 
we have that $\llbracket B \rrbracket  =R \le 
d \cdot n^{\log (2d)}, $ and the lemma follows.
\end{proof}

\subsection{Rounding to grid points}
Let us wrap up what we have constructed so far. Based on the tree-representation we
have defined weights for each face. This gave rise to a realization of a 
projection of the stacked polytope. As a next step we have computed a lifting based 
on the \fws. This construction produces a convex realization of the desired stacked
polytope, we even know that the stresses are polynomially related. However, the realization
does not lie on a polynomial grid yet. To obtain an integer realization we round the coordinates
of the points down. The rounding will be carried out in two steps. First we 
perturb all coordinates such that they are multiples of some parameter $\alpha$, resp. $\alpha_z$ for the $z$-coordinates.
In a second step we scale the perturbed embedding by multiplying all coordinates with $1/\alpha$, resp. with
$1/\alpha_z$, for the $z$-coordinates. The details for the rounding procedure are little bit more subtle: We start
with rounding the coordinates in the flat embedding, then we update the vertical shifts $\zeta_i$ slightly.
The $z$-coordinates are rounded with respect to the lifting defined by the modified vertical shifts.

Since projected volumes play an important role
 in our construction we discuss as a first step how the rounding will effect these volumes.
\begin{lem}\label{lem:perturbation}
If we round the coordinates of the flat embedding down, such that  every coordinate is a multiple of $\alpha$
  we have that for every facet $f_X$
$$ \left\lvert \llbracket X \rrbracket - \llbracket X' \rrbracket\right\rvert  \le \alpha d^2   L^{d-2} , $$
where $X'$ denotes the points $X$ after the rounding.
\end{lem}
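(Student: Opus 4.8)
The plan is to treat the rounding perturbation coordinate-by-coordinate and expand the determinant $\llbracket X\rrbracket$ as a multilinear function of its $d-1$ arguments (the projected vertices of $f_X$, each a vector in $\R^{d-1}$, augmented by a row of ones as in the definition of $[\,\cdot\,]$). Write $X=(\p_1,\dots,\p_{d-1})$ and $X'=(\p_1',\dots,\p_{d-1}')$ with $\p_i'=\p_i+\delta_i$, where each coordinate of $\delta_i$ has absolute value at most $\alpha$ (that is what rounding down to a multiple of $\alpha$ guarantees). Since the bracket $\llbracket\cdot\rrbracket$ is an alternating multilinear form in the rows, I would telescope
\[
\llbracket X\rrbracket-\llbracket X'\rrbracket=\sum_{i=1}^{d-1}\bigl(\llbracket\p_1',\dots,\p_{i-1}',\p_i,\p_{i+1},\dots\rrbracket-\llbracket\p_1',\dots,\p_{i-1}',\p_i',\p_{i+1},\dots\rrbracket\bigr),
\]
so each summand is a single bracket in which one row has been replaced by $\p_i-\p_i'=-\delta_i$ and all other rows are either original or already-rounded vertices of $f_X$. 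Each such bracket is therefore (up to sign) the signed volume form evaluated on $d-2$ ``honest'' vertices plus one ``error row'' $\delta_i$, together with the all-ones row.

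Next I would bound a single one of these error brackets. Expanding the determinant along the error row $\delta_i$ gives a sum of $d-1$ coordinate-errors (each $\le\alpha$) times cofactors; each cofactor is a $(d-1)\times(d-1)$ minor whose entries are coordinates of the vertices of $f_X$ (or $1$'s). All vertices of the flat embedding lie in the simplex $\Delta_B$, whose vertices are $\mathbf0$ and $L\mathbf e_j$, hence every vertex coordinate lies in $[0,L]$. A crude bound — Hadamard, or just $(d-1)!$ times the product of entry-bounds — shows each such minor is at most something like $L^{d-2}$ in absolute value (the row of ones contributes no growth; the remaining $d-2$ rows of vertex-coordinates each contribute a factor $L$). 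Multiplying by the $\le\alpha$ coordinate-error and summing the $d-1$ cofactor terms, a single error bracket is $O(\alpha\, d\, L^{d-2})$; summing over the $i=1,\dots,d-1$ telescoping terms gives the claimed $\alpha\, d^2\, L^{d-2}$, with the constants arranged to absorb the factorials. I would be a little careful to state the minor bound cleanly — e.g.\ ``the absolute value of the signed volume of a simplex with all vertices in $[0,L]^{d-1}$ is at most $L^{d-2}/(d-2)!$, so its bracket is at most $L^{d-2}$'' — so the two factors of $d$ come out transparently (one from the $d-1$ telescoping steps, one from the $d-1$ cofactor terms of the Laplace expansion).

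**The main obstacle** is purely bookkeeping: getting the constant exactly $d^2$ rather than, say, $d^2(d-1)!$ or $2d^2$. The telescoping and the multilinearity are routine, and the geometric fact that all coordinates lie in $[0,L]$ is immediate from the construction of $f_B$; the only real care needed is in choosing which crude determinant bound to invoke so that the factorials from the volume normalization and from the Laplace expansion cancel against each other and leave a clean $d^2$. If matching $d^2$ on the nose turns out to be awkward, the honest fallback is to prove the slightly weaker $\le \alpha\,d^2 L^{d-2}$ by noting each of the $d-1$ error brackets is itself at most $(d-1)\cdot\alpha\cdot L^{d-2}$ (Laplace expansion: $d-1$ terms, each a coordinate-error $\le\alpha$ times a bracket of $d-2$ vertices all in $[0,L]^{d-1}$, which is $\le L^{d-2}$), so the total is at most $(d-1)^2\alpha L^{d-2}\le d^2\alpha L^{d-2}$ — which is exactly the stated inequality with no hidden factorials at all.
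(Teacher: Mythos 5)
Your plan is essentially the same as the paper's: both proofs expand the difference of determinants by multilinearity and then Laplace-expand the resulting single ``error'' brackets, bounding each cofactor by $L^{d-2}$. The only structural difference is the order of the telescoping: you telescope over the $d$ \emph{vertices} (columns) of $X$, whereas the paper rounds one \emph{coordinate axis} (row) at a time, taking $d-1$ rounds of $d$ terms each. The two decompositions give the same total $d(d-1)\,\alpha L^{d-2}\le d^2\alpha L^{d-2}$, and neither buys anything over the other. Your presentation does contain a small counting slip: since $f_X$ is a facet, $X$ is a $(d-1)$-simplex with $d$ vertices, not $d-1$; correspondingly, each cofactor in the Laplace expansion of an error bracket involves $d-1$ points (one column and one row deleted from a $d\times d$ matrix), not $d-2$. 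With these corrected the totals still match the claimed bound.

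The one genuine gap is in the cofactor bound. You mention that the vertices lie in $\Delta_B$ but then bound the minors using only the weaker fact that all coordinates lie in $[0,L]$. The statement ``the signed volume of a $(d-2)$-simplex with vertices in $[0,L]^{d-2}$ is at most $L^{d-2}/(d-2)!$, so its bracket is at most $L^{d-2}$'' is false for $d\ge 5$: for example, for $d=5$ the tetrahedron with vertices $(0,0,0),(1,1,0),(1,0,1),(0,1,1)$ in $[0,1]^3$ has volume $1/3 > 1/3!$, so the corresponding bracket is $2L^3$, not $\le L^3$. Since the slack between $d(d-1)$ and $d^2$ is only a factor $d/(d-1)$, this factor of $2$ already breaks the asserted bound at $d=5$. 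The fix is precisely what the paper does: use the fact that all points (rounded or not, since rounding down keeps them inside $\Delta_B$) lie in $\Delta_B$, whose projection onto $\R^{d-2}$ (dropping any one coordinate) is a $(d-2)$-simplex with vertices $\mathbf{0}, L\mathbf{e}_1',\ldots, L\mathbf{e}_{d-2}'$ of volume exactly $L^{d-2}/(d-2)!$. Any $(d-2)$-simplex inscribed in this has at most that volume, giving the clean bound $|\llbracket X_{-i}\rrbracket|\le L^{d-2}$ with no hidden factorials. You should replace the cube bound with this simplex bound to make the argument go through for all $d$.
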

\begin{proof}
Remember that the points $X$ are contained inside the simplex $\Delta_B$, which is spanned by the standard basis vectors of $\R^{d-1}$ scaled by $L=\sqrt[d-1]{R}$
and the origin. 
We first show how the rounding of the first coordinate for every $x \in X$ effects $\llbracket X
 \rrbracket$. The point set $X$ after rounding the first coordinate is named $X^1$.
Let $E(1)$ denote the change of $\llbracket X \rrbracket$ in the worst case, that is,  $E(1)=\max_{X} \left\lvert \llbracket X \rrbracket - \llbracket X^1 \rrbracket\right\rvert $. We denote by $x_i\in \R$ the first coordinate (the $x$-coordinates) of the $i$th point in  $X$, and
 for a point sequence  $X$ we denote by  $X_{-i}$ the set with removed $x$-coordinates and without the $i$th point of $X$. Note that
$X_{-i} = X^1_{-i}$. Let the change of the coordinates $x_i$ be $\varepsilon_{i} \le \alpha$.
We estimate $E(1)$ using the Laplace expansion for the determinants (along the row for the $x$-coordinates) by
\begin{align*}
E(1)=\left\lvert \llbracket X \rrbracket - \llbracket X^1 \rrbracket\right\rvert & 
\le \left\lvert \sum_{i=1}^d (-1)^i x_i \llbracket  X_{-i} \rrbracket  -  \sum_{i=1}^d (-1)^i (x_i+\varepsilon_{i}) \llbracket  X^1_{-i} \rrbracket  \right\rvert \\
& \le \left\lvert \sum_{i=1}^d (-1)^i x_i \llbracket  X_{-i} \rrbracket  -  \sum_{i=1}^d (-1)^i (x_i+\varepsilon_{i}) \llbracket  X_{-i} \rrbracket \right\rvert \\
&  \le \sum_{i=1}^d \left \lvert \alpha \llbracket  X_{-i} \rrbracket \right\rvert  \\
& \le d \alpha L^{d-2} .
\end{align*}
For the last transition we upper bounded the determinants $\llbracket  X_{-i} \rrbracket$ by $L^{d-2}$. 
This bound follows, since the maximal volume of a $(d-2)$-simplex  in $\Delta_B$ is spanned by the $d-1$ points 
of $B\setminus \mathbf{0}$.  This point set would also maximize $\llbracket  X_{-i} \rrbracket$
and hence this determinant is at most $L^{d-2}$. 

Note that there was nothing special in rounding the \emph{first} coordinate (while keeping the others fixed). If we would
round only the $k$th coordinate, we obtain the same estimation. Thus, we can apply the rounding
of a single coordinate one by one for each of the $(d-1)$-coordinates of $\R^{d-1}$. Every time
we round in one coordinate  we
introduce an additive ``error'' of  $d \alpha L^{d-2}$  to $\left\lvert \llbracket X \rrbracket - \llbracket X' \rrbracket\right\rvert$.
Thus in total we get the asserted bound of $\left\lvert \llbracket X \rrbracket - \llbracket X' \rrbracket\right\rvert  \le \alpha d^2   L^{d-2}$.
\end{proof}

In the following we use $\llbracket X' \rrbracket$ and similar expressions to denote the corresponding determinants after rounding.  
\begin{cor}\label{cor:perturbation}
If we round the coordinates of the flat embedding down, such that  every coordinate is a multiple of $\alpha$, we have 
for every facet $f_X$ in the flat embedding 
$$ 1-  \alpha d^2 L^{d-2} \le \frac{\llbracket X' \rrbracket }{\llbracket X \rrbracket } \le 1+ \alpha d^2 L^{d-2}.
$$
\end{cor}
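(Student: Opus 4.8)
The plan is to deduce the corollary directly from Lemma~\ref{lem:perturbation}, since it is essentially a restatement in multiplicative form. First I would recall that every facet $f_X$ in the flat embedding has $\llbracket X \rrbracket \neq 0$ (the points are affinely independent, so the determinant is nonzero), which lets us divide by it. Then I would invoke Lemma~\ref{lem:perturbation}, which gives $\lvert \llbracket X \rrbracket - \llbracket X' \rrbracket \rvert \le \alpha d^2 L^{d-2}$. To turn this into the claimed bound on the ratio $\llbracket X' \rrbracket / \llbracket X \rrbracket$, I would want to divide both sides by $\lvert \llbracket X \rrbracket \rvert$, which requires knowing $\lvert \llbracket X \rrbracket \rvert \ge 1$.

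The key observation I would use is that, by the construction of the \fws, the (normalized) volume of each face $f_X$ in the flat embedding equals its \fw $a(v)$, and by Lemma~\ref{lem:balance} no \fw is less than $1$; hence $\lvert \llbracket X \rrbracket \rvert = \lvert a(v) \rvert \ge 1$ for every facet. Dividing the inequality of Lemma~\ref{lem:perturbation} by $\lvert \llbracket X \rrbracket \rvert \ge 1$ then yields
$$ \left\lvert \frac{\llbracket X' \rrbracket}{\llbracket X \rrbracket} - 1 \right\rvert = \frac{\lvert \llbracket X \rrbracket - \llbracket X' \rrbracket \rvert}{\lvert \llbracket X \rrbracket \rvert} \le \frac{\alpha d^2 L^{d-2}}{\lvert \llbracket X \rrbracket \rvert} \le \alpha d^2 L^{d-2}, $$
which is exactly the pair of inequalities in the statement once the absolute value is unfolded.

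The main (and only mild) obstacle is making the step $\lvert \llbracket X \rrbracket \rvert \ge 1$ rigorous: one must be careful that the corollary is applied to facets of the flat embedding, for which the identification of $\lvert \llbracket X \rrbracket \rvert$ with a \fw holds, and that this \fw is at least $1$ by Lemma~\ref{lem:balance}. If one instead wishes to avoid relying on the \fw lower bound here, an alternative is to state the corollary with the weaker conclusion $1 - \alpha d^2 L^{d-2}/\lvert \llbracket X \rrbracket \rvert \le \llbracket X' \rrbracket/\llbracket X \rrbracket \le 1 + \alpha d^2 L^{d-2}/\lvert \llbracket X \rrbracket \rvert$ and absorb the denominator later; but since the \fw bound is already available, I would simply cite it and finish in two lines. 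No further computation is needed beyond this division.
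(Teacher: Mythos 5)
Your proposal is correct and matches the paper's proof: the paper likewise cites Lemma~\ref{lem:perturbation} together with the observation that $1 \le |\llbracket X \rrbracket|$ for every facet of the flat embedding, then divides. You merely spell out in more detail why $|\llbracket X \rrbracket| \ge 1$ (via the identification of projected volumes with \fws and Lemma~\ref{lem:balance}), which the paper leaves implicit.
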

\begin{proof}
The statement follows from Lemma~\ref{lem:perturbation} and the observation that for any facet $f_X$ 
in the flat embedding we have $1\le  | \llbracket X \rrbracket | $.
\end{proof}

As a next step we discuss, how to set the parameter $\alpha$, such that the perturbed flat embedding will still have a
positive stress. To describe the lifting (stress) we defined for every vertex in the flat embedding a vertical shift $\zeta_i$, as given in \eqref{eq:zetadef}.
The definition of $\zeta_i$ was based on the \fws obtained from the balanced tree-representation. We adjust the vertical shifts after the perturbation
slightly. When stacking $\p_i$ we introduced $d$ new faces. Let the faces $f_{A_i}$ and $f_{B_i}$ be the two faces out of the $d$ newly introduced faces with the largest volume.
We define 
$$\zeta_i':=|\llbracket A_i' \rrbracket \llbracket B_i'\rrbracket |.$$ 

\begin{lem}\label{lem:stressperturbation}
When we pick as the perturbation parameter $\alpha =1/(10 d^2  L^{d-2} R)$ then the perturbed
flat embedding with the vertical shifts $\zeta_i'$ induces an embedding, whose interior 
stresses are at least $4/5$, and whose boundary stresses are negative and larger than $-2 R$.
\end{lem}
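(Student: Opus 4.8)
The plan is to control the rounding by a single relative‑error parameter. I set $\beta := \alpha d^{2} L^{d-2}$, so that the choice $\alpha = 1/(10 d^{2} L^{d-2} R)$ gives $\beta = 1/(10R)$. By Corollary~\ref{cor:perturbation}, for every facet $f_X$ of the flat embedding $|\llbracket X' \rrbracket| / |\llbracket X \rrbracket| \in [1-\beta,\,1+\beta]$; since every projected facet volume is at least $1$, it stays at least $1-\beta>0$, so no simplex degenerates and the rounded point set is again a flat embedding of the same stacked polytope (all signs $\llbracket\cdot\rrbracket$, hence the left/right labels of the ridges, and the conditions that each stacked vertex lies in the interior of its facet, are preserved). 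Consequently Lemmas~\ref{lem:int} and~\ref{lem:ext} still describe how each stress $\omega'_X$ of the rounded complex lifted by the shifts $\zeta'_i$ is built up: if $f_X$ is created by stacking $\p_i$ on a facet $f_D$, its initial (birth) value is the positive number $\zeta'_i\,|\llbracket D' \rrbracket| / (|\llbracket S' \rrbracket|\,|\llbracket T' \rrbracket|)$, and it is then decreased by $\zeta'_k / |\llbracket S'_k \rrbracket|$, with $S_k = X \conc \p_k$, for every later vertex $\p_k$ stacked onto the facet currently incident to $f_X$; for a ridge of $f_B$ there is no birth term.

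For an interior ridge I would estimate the two kinds of terms separately, rather than comparing to the exact stress. Since $f_{A_i},f_{B_i}$ are by definition the two facets of largest rounded volume among the $d$ facets born with $\p_i$, and $f_S,f_T$ are two of those facets, $\zeta'_i = |\llbracket A'_i \rrbracket|\,|\llbracket B'_i \rrbracket| \ge |\llbracket S' \rrbracket|\,|\llbracket T' \rrbracket|$, so the birth value is at least $|\llbracket D' \rrbracket| \ge (1-\beta)|\llbracket D \rrbracket|$. For the decrements I would replay the charging argument of Lemma~\ref{lem:stressvalues}: in the exact embedding each $\zeta_k/|\llbracket S_k \rrbracket|$ equals the \fw of a facet $f_{Y_k}\neq f_{S_k}$ born with $\p_k$, and the projections of these $f_{Y_k}$, together with the projections of the two facets finally incident to $f_X$, have pairwise disjoint interiors inside $\pi(\Delta_D)$. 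A short case check — according to whether $f_{S_k}$ equals $f_{A_k}$, equals $f_{B_k}$, or is one of the $d-2$ smaller facets, noting that $\beta = 1/(10R)$ is small enough that the heavy‑edge child really is the largest facet whenever $A_k>B_k$ — gives $\zeta'_k/|\llbracket S'_k \rrbracket| \le \tfrac{(1+\beta)^2}{1-\beta}\,|\llbracket Y'_k \rrbracket|$, so that
\[
\sum_k \frac{\zeta'_k}{|\llbracket S'_k \rrbracket|} \;\le\; \frac{(1+\beta)^2}{1-\beta}\sum_k |\llbracket Y'_k \rrbracket| \;\le\; \frac{(1+\beta)^2}{1-\beta}\Bigl(|\llbracket D' \rrbracket| - 2(1-\beta)\Bigr),
\]
since in the rounded embedding the $f_{Y_k}$ and the two finally‑incident facets still partition part of $\pi(\Delta_{D'})$ and the latter each have rounded volume at least $1-\beta$.

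Subtracting, using $|\llbracket D \rrbracket| \le R$ (a \fw of a facet), $|\llbracket D' \rrbracket| \le (1+\beta)|\llbracket D \rrbracket|$ and $\beta = 1/(10R)$,
\[
\omega'_X \;\ge\; (1-\beta)|\llbracket D \rrbracket| - \frac{(1+\beta)^3}{1-\beta}|\llbracket D \rrbracket| + 2(1+\beta)^2 \;=\; -\,|\llbracket D \rrbracket|\,\frac{\beta(5+2\beta+\beta^2)}{1-\beta} + 2(1+\beta)^2 \;\ge\; \frac{4}{5},
\]
because the first summand is at least $-\tfrac{(5+2\beta+\beta^2)/10}{1-\beta} \ge -\tfrac{3}{5}$ for $\beta\le\tfrac1{10}$. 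For a ridge $f_X$ on $f_B$ there is no birth term, so $\omega'_X = -\sum_k \zeta'_k/|\llbracket S'_k \rrbracket|$ is a sum of strictly negative contributions (at least one exists, from the first stacking, whose facet carries every ridge of $f_B$ on its boundary), hence $\omega'_X<0$; and the same packing bound gives $|\omega'_X| \le \tfrac{(1+\beta)^2}{1-\beta}\sum_k |\llbracket Y'_k \rrbracket| \le \tfrac{(1+\beta)^3}{1-\beta}R < 2R$, the charged faces being disjoint subfaces of $\pi(\Delta_B)$ and $\tfrac{(1+\beta)^3}{1-\beta} < 2$.

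The main obstacle is precisely that a naive relative‑error analysis through the exact stress does not work: the sum of the positive decrements along a ridge can be as large as $|\llbracket D \rrbracket| \approx R$, so an $O(\beta)$ relative error there is $\Theta(1)$ and would swallow the gap of~$1$ between $\omega_X$ and~$0$. The point of the plan is to never pass through the exact stress for the interior estimate — bounding the rounded birth value from below by $(1-\beta)|\llbracket D \rrbracket|$ and the rounded decrement sum from above by essentially $\tfrac{(1+\beta)^2}{1-\beta}|\llbracket D' \rrbracket|$ minus a genuine additive constant coming from the facets that are never charged — which means the packing/disjointness structure underlying Lemma~\ref{lem:stressvalues} has to be re‑established in the rounded embedding. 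The one genuinely fiddly step is the case check that the choice of $f_{A_k},f_{B_k}$ keeps each rounded decrement comparable to the rounded volume of one of those pairwise‑disjoint facets.
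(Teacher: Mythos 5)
Your proof is correct and follows essentially the same route as the paper's: both start from the decomposition $\omega_X' = \omega_X'^+ - \omega_X'^-$, bound the birth term below by $|\llbracket D'\rrbracket|$ using that $f_{A_i},f_{B_i}$ are the two largest newly born facets, bound each decrement above by a $(1+O(\beta))/(1-O(\beta))$ factor times the rounded volume of a charged facet via Corollary~\ref{cor:perturbation}, and exploit the disjoint-interiors packing inside $\pi(\Delta_{D'})$ plus the additive margin of the uncharged facet(s) of volume at least $1-\beta$ to get an interior bound bounded away from zero; the boundary case drops the birth term. The only divergences are cosmetic: your per-decrement factor $(1+\beta)^2/(1-\beta)$ is slightly looser than the paper's $(1+\beta)/(1-\beta)$ (the case analysis actually yields the sharper factor, since whenever $f_{S_k}\notin\{f_{A_k},f_{B_k}\}$ the unperturbed volumes of $f_{S_k}$ and $f_{B_k}$ coincide), and you reserve two uncharged facets while the paper reserves one — both choices comfortably clear the $4/5$ target. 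Your observation that $\beta=1/(10R)$ forces the heavy-edge child to remain the rounded-volume maximizer whenever $A_k>B_k$ (because $A_k-B_k\ge 1$ and $B_k\le R$, so $A_k/B_k\ge 1+1/R>(1+\beta)/(1-\beta)$) is correct and is a useful clarification that the paper leaves implicit.
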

\begin{proof}
We mimic the strategy of the proof of Lemma~\ref{lem:stressvalues}. Again, all faces in this proof
are considered as projected into the $z=0$ hyperplane. For the proof of the lemma the sign of 
the determinants $\llbracket \cdot \rrbracket$ is not important. For the sake of a simple 
presentation we misuse notation and 
simply write $\llbracket \cdot \rrbracket$ instead of $| \llbracket \cdot \rrbracket |$
in this proof.

Every stress $\omega_{X}$ is a combination 
of a positive stress and several negative stresses attributed to different stacking operations. 
Let $\omega_{X}^+$ be the positive stress, that is the initial nonzero 
stress with respect to the stacking sequence, and let $\omega_{X}^-$ the absolute value of the sum of all negative stresses,
such that $\omega_{X}=\omega_{X}^+ - \omega_{X}^-$.  
To bound $\omega_{X}$ we derive bounds for $\omega_{X}^+$ and $\omega_{X}^-$. We start with the bound on the positive 
stress. Assume $\omega_X^+$ was introduced by stacking $\p_i$ at some face $f_D$, such that due 
to Lemma~\ref{lem:int} we have 
$$\omega_X^+ = \zeta'_i \left\lvert  \frac{\llbracket D' \rrbracket} {\llbracket S'\rrbracket \llbracket T' \rrbracket} \right\rvert. $$
Here, $f_S$ and $f_T$ are the two faces introduced by stacking $\p_i$ that contain $f_X$. The height $\zeta_i'$ is defined as the product
of two \fws corresponding to $f_{A_i}$ and $f_{B_i}$. Assume that $\llbracket A_i' \rrbracket \ge   \llbracket B_i' \rrbracket$ and 
that $\llbracket S' \rrbracket \ge   \llbracket T' \rrbracket$. By the definition of $\zeta'_i$ we have that 
$\llbracket A_i' \rrbracket \ge   \llbracket S' \rrbracket$ and $\llbracket B_i' \rrbracket \ge   \llbracket T' \rrbracket$. 
Therefore, $\omega_X\ge  \llbracket D' \rrbracket$.

The value of $\omega_X^-$ is composed of several ``charges''. Whenever we stack inside a face that contains $f_X$
we increase $\omega_X^-$. Let us study now one of these situations. Assume we stack $\p_k$ inside a face
that contains $f_X$. Let $f_{S_k}$ be the new face that contains $f_X$. By Lemma~\ref{lem:ext} we increase
$\omega_X^-$ by $|\zeta_j'/\llbracket S_k' \rrbracket |:=\mathit{inc}_{X,k}$.
Due to the balanced \fws we had in the unperturbed setting only two different new ``face volume values'' when 
stacking a vertex. Hence for $\zeta_i':=|\llbracket A'_k\rrbracket \llbracket B'_k \rrbracket|$, we had either 
$\llbracket S_k \rrbracket = \llbracket A_k \rrbracket$, or $\llbracket S_k \rrbracket = \llbracket B_k \rrbracket$. 
We define
$$ C_k := \begin{cases}
	B_k & \text{if $\llbracket S_k \rrbracket  =  \llbracket A_k \rrbracket  $} \\
    A_k & \text{if $\llbracket S_k \rrbracket  =  \llbracket B_k \rrbracket  $} 
\end{cases}
$$
and set $\delta_+:=1+ \alpha d^2 L^{d-2}$ and $\delta_-=1- \alpha d^2 L^{d-2}$.
If $\llbracket S_k \rrbracket = \llbracket A_k \rrbracket$ 
we have according to Corollary~\ref{cor:perturbation}
$$\mathit{inc}_{X,k}=\left\lvert \frac{\llbracket A'_k \rrbracket}{\llbracket S_k' \rrbracket} 
\llbracket B'_k \rrbracket \right\rvert \le  \left\lvert \frac{\delta_+\llbracket A_k \rrbracket}{\delta_-\llbracket S_k \rrbracket}  \llbracket B'_k \rrbracket \right \rvert 
= \frac{\delta_+}{\delta_-}  \llbracket B'_k \rrbracket = \frac{\delta_+}{\delta_-}  \llbracket C'_k \rrbracket.$$
The remaining case $\llbracket S_k \rrbracket = \llbracket B_k \rrbracket$ is completely symmetric and does
also give $\mathit{inc}_{X,k}  \le  \frac{\delta_+}{\delta_-}  \llbracket C'_k \rrbracket$.
Let $K$ be the set of vertex indices, whose stacking contributed to $\omega_X^-$. As noticed in 
Lemma~\ref{lem:stressvalues} (see also Fig.~\ref{fig:propagation}), for any two distinct $s,t\in K$ we have that $f_{C_s}$ and $f_{C_t}$
have disjoint interiors, and furthermore the set $\bigcup_{k\in K} f_{C_k}$ is contained in  the perturbed simplex $\pi(\Delta_D)$, but 
``misses'' at least one face. By Corollary~\ref{cor:perturbation} the \fw of the
missing face is at least $\delta_-$. Therefore,
$$\omega_X^-=\sum_{k\in K} \mathit{inc}_{X,k} \le \sum_{k\in K} \frac{\delta_+}{\delta_-}   \llbracket C'_k \rrbracket
=\frac{\delta_+}{\delta_-}  \sum_{k\in K}   \llbracket C'_k \rrbracket \le \frac{\delta_+}{\delta_-}   \left(  \llbracket D' \rrbracket - \delta_- \right)
= \frac{\delta_+}{\delta_-}  \llbracket D' \rrbracket - \delta_+.
$$

We finish the proof by combining the bounds for $\omega_X^+$ and $\omega_X^-$. When we pick $\alpha = 1/(10 d^2 L^{d-1} R)$
as specified in the lemma then we get $\delta_+=1+\frac{1}{10R}$ and $\delta_-= 1-\frac{1}{10R}$.
We can now obtain the following bound for $\omega_X$ when $X$ is an interior ridge (note that $\pi(\Delta_{D'}) \subseteq \pi(\Delta_{B})$) :
\begin{align*}
\omega_X& \ge \omega_X^+- \omega_X^- \\
& \ge \llbracket D' \rrbracket - \left( \frac{\delta_+}{\delta_-}  \llbracket D' \rrbracket - \delta_+\right) \\
& = \delta_+  - \llbracket D' \rrbracket \left( \frac{\delta_+}{\delta_-}-1 \right) \\
& \ge \delta_+ -  \llbracket B\rrbracket \left( \frac{\delta_+}{\delta_-}-1 \right)  = \delta_+  - \delta_+ R  \left( \frac{\delta_+}{\delta_-} -1 \right) \\
&= \frac{80R^2-2R-1}{100R^2-10R}\\
&\ge \frac{4}{5}
\end{align*}
If $X$ is a boundary face  we have $\omega_X^+=0$. Notice that for our choice of $\alpha$ we have  $\delta_+/\delta_- \le 2$, 
for $R\ge 3$. 
We conclude that for a boundary face we have
$$ \omega_X \ge -\left( \frac{\delta_+}{\delta_-}  \llbracket B' \rrbracket  + \delta_+ \right) \ge - \frac{\delta_+}{\delta_-}  \llbracket B' \rrbracket \ge -2  \llbracket B' \rrbracket \ge -2 R.$$
\end{proof}

The choice of the parameter $\alpha = 1 / (10 d^2  L^{d-2} R)$ ensures, that
no volume of a face will flip its sign. In particular, by Lemma~\ref{lem:perturbation}, the change of volume is less 
than $1/(10\llbracket B \rrbracket)$.

To construct  an integer realization we need to round the $z$-coordinates as well. 
%Similarly to the determinants we denote the rounded down coordinate $z_i$ by $z_i'$.
We round the $z$-coordinates such that every $z$-coordinate is a multiple of $\alpha_z$, where $\alpha_z$ is some value to be determined
later. The final analysis requires an upper bound for the maximal $z$-coordinate before rounding, which we give in the following lemma. 

\begin{lem}\label{lem:zmax}
For $\zmax$ being the  maximal $z$-coordinate in the lifting of the perturbed flat embedding induced 
by the vertical shifts $\zeta'_i$ we have that
$$0< \zmax < 2 R^2.$$
\end{lem}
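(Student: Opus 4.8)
The plan is to bound $z_\text{max}$ by tracking how the $z$-coordinate of each stacked vertex relates to the heights of the facet it is stacked upon. Recall that a vertex $\p_i$, stacked on a facet $f_D$, has $z$-coordinate $z_{p_i} = z_D(\er_i) + \zeta_i'$, where $\er_i = \pi(\p_i)$ lies inside $\pi(\Delta_D)$ and $\zeta_i' = |\llbracket A_i' \rrbracket \llbracket B_i' \rrbracket|$. Since $\er_i$ is in the (perturbed) simplex $\pi(\Delta_D)$, the value $z_D(\er_i)$ is a convex combination of the $z$-coordinates of the $d$ vertices of $f_D$, hence is at most the maximum $z$-coordinate among those vertices. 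So along any root-to-leaf path in the tree-representation $\mathcal{T}(P)$, the maximal $z$-coordinate of a vertex is at most the sum of the vertical shifts $\zeta_j'$ over the facets on that path (the base face $f_B$ contributes $z$-coordinate $0$).

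First I would make this precise: for any vertex $v$ the $z$-coordinate satisfies $z_v \le \sum \zeta_j'$, summed over the ancestors of (the facet containing) $v$ in $\mathcal{T}(P)$. The key point is then to bound $\sum_j \zeta_j'$ over a root-to-leaf path. Here I would use that the perturbed \fws $\llbracket \cdot' \rrbracket$ are, by Corollary~\ref{cor:perturbation}, within a factor $\delta_+ = 1 + 1/(10R)$ of the (integer) unperturbed \fws, and that the unperturbed \fws are \emph{nested and additive} along the tree: the \fw of a parent facet is the sum of the \fws of its children. Consequently, writing $\zeta_j = A_j B_j$ and noting $B_j \le A_j \le$ (\fw of the facet stacked on, call it $D_j$), each term is at most $A_j \cdot (\text{\fw of } D_j)$; moreover the $A_j$ along a fixed root-leaf path are \fws of a nested decreasing chain of facets, so $\sum_j A_j$ telescopes and is bounded by the root \fw $R$. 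More carefully, since on the path the facet $D_{j+1}$ is a child of $D_j$ and the \fws strictly decrease, one gets $\sum_j \zeta_j \le R \cdot \sum_j (\text{drop in \fw at step } j) \le R \cdot R = R^2$, and with the perturbation factor $(\delta_+/\delta_-)^2 \le 4/3$ for $R \ge 3$ this stays below $2R^2$.

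The main obstacle I expect is making the telescoping argument for $\sum_j \zeta_j'$ airtight: one must carefully identify, for each stacking on the path, which of the $d$ newly created facets is the heavy-edge child (whose \fw is the ``large'' value $A_j$, and which continues the heavy path) versus the one actually lying on the root-leaf path, and argue that the products $A_j B_j$ can be charged to \emph{disjoint} decrements of the \fw so that the total is at most (root \fw)$\,\cdot\,$(root \fw)$= R^2$. This is essentially the same disjointness bookkeeping as in the proof of Lemma~\ref{lem:stressvalues} (cf.\ Fig.~\ref{fig:propagation}), applied now along a path rather than around a fixed ridge, and combined with the multiplicative perturbation bound from Corollary~\ref{cor:perturbation}. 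Once that bound is in hand, the strict inequalities $0 < z_\text{max}$ (all vertical shifts are positive) and $z_\text{max} < 2R^2$ follow immediately.
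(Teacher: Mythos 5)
Your proposal is correct, but it takes a genuinely different route from the paper's. The paper argues from global convexity: a facet $f_Y$ adjacent to $f_B$ across a boundary ridge $f_X$ supports a bounding hyperplane of the lifted polytope, so $\zmax\le z_Y(\er)$ for the projection $\er$ of the highest vertex; the creasing formula gives $z_Y(\er)=-\omega_X\llbracket Y\rrbracket$, and the boundary-stress bound $\lvert\omega_X\rvert<2R$ from Lemma~\ref{lem:stressperturbation} together with $f_Y\subset f_B$ finishes in two lines. You instead argue bottom-up from the stacking recursion $z_{\p_i}\le\max_{v\in D}z_v+\zeta'_i$, bounding $\zmax$ by $\sum_j\zeta'_j$ along a root-to-node path in $\mathcal{T}(P)$ and telescoping via the additive \fw structure. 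Your telescoping does close --- one intermediate claim, that ``$\sum_j A_j$ telescopes,'' is too hasty (the $A_j$ merely form a decreasing sequence), but the refinement you then give is right: writing $w_j$ for the \fw of $D_j$, one has $\zeta_j=A_jB_j\le R\,(w_j-w_{j+1})$ whether the path continues to the heavy child (drop $(d-1)B_j\ge B_j$, use $A_j\le R$) or to a light child (drop $\ge A_j$, use $B_j\le R$), and $\zeta_m\le R\,w_m$ for the terminal step, so $\sum_j\zeta_j\le R\,w_0\le R^2$ unperturbed, times a factor $\delta_+^2<2$ for the perturbation. Your route buys independence from Lemma~\ref{lem:stresssign} and Lemma~\ref{lem:stressperturbation}: it is a self-contained arithmetic estimate on the \fw tree that could be proved before convexity of the lifting is even established. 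The paper's route is shorter once the stress machinery is already in hand.
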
 
\begin{proof}
The $z$-coordinates of all vertices not on $f_B$ are positive since all vertical shifts are positive. 
We consider the perturbed flat embedding.
Take any boundary ridge $f_X$ of $f_B$. By Lemma~\ref{lem:stressperturbation} we have that $-\omega_X<2 R$.
Let $\p_i$ be the vertex with the highest $z$-coordinate $\zmax$ and set $\er=\pi(\p_i)$. Now
take the facet $f_Y$ adjacent to $f_B$ via $f_X$ (we let $Y$ and $B$ coincide on the first $d-2$ vertices).
 Due to the convexity of the lifting 
$f_Y$ supports a bounding hyperplane, and therefore $z_Y(\er)\ge \zmax$. 
By Equation~\eqref{eq:defcrease} we have $ \omega_X\llbracket Y \rrbracket = z_B(\er)-z_Y(\er)=- z_Y(\er).$
Since $f_Y$ is properly contained inside $f_B$ it follows that 
$$\zmax\le  z_Y(\er) =  - \omega_X \llbracket Y \rrbracket < -\omega_X \llbracket B \rrbracket \le 2 R^2.$$
\end{proof}

By rounding the $z$-coordinates we might violate the convexity of the lifting. 
The following lemma shows us how to carry out the rounding of the $z$-coordinates after we have rounded
the coordinates of the flat embedding, such that the resulting embedding remains a convex polytope.
\begin{lem}\label{lem:zperturbation}
By setting  $\alpha_z=1 /(3 R)$ and rounding such that all $z$-coordinates are multiples of $\alpha_z$ the 
lifting defined by $\zeta'_i$ and the perturbed flat embedding will remain an embedding of a convex polytope. 
\end{lem}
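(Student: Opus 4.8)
The plan is to certify convexity via Lemma~\ref{lem:stresssign}. Rounding every coordinate down keeps all heights nonnegative, so I only need to check, for the rounded lifting and with respect to the (unchanged, already perturbed) flat embedding, that every interior ridge still carries a strictly positive stress and every boundary ridge a strictly negative one. Fix $\alpha_z = 1/(3R)$ and replace each height $z_\p$ by the largest multiple $z'_\p$ of $\alpha_z$ with $z'_\p \le z_\p$, so $z_\p - \alpha_z < z'_\p \le z_\p$.

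The key step is to re-read the rounded lifting as a lifting of the perturbed flat embedding by a \emph{new} sequence of vertical shifts. For the vertex $\p_i$ stacked onto $f_{D_i}$, let $h_i$ be the value at $\pi(\p_i)$ of the affine function interpolating the rounded heights of the vertices of $f_{D_i}$, and set $\zeta''_i := z'_{\p_i} - h_i$. Running through the vertices in stacking order, the height assigned to $\p_i$ by the shift sequence $\{\zeta''_i\}$ is $h_i + \zeta''_i = z'_{\p_i}$, so by induction this lifting is exactly the rounded one; the one point to verify is that $\pi(\p_i)$ still lies in the interior of the perturbed face $f_{D_i}$ — so that $h_i$ is a genuine convex combination of rounded heights — which is the content of Corollary~\ref{cor:perturbation} (no projected face volume changes sign). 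Since $z'_{\p_i}$ and each rounded height entering $h_i$ drop from their old values by something in $[-\alpha_z,0]$ and $h_i$ is a convex combination, comparing with $\zeta'_i = z_{\p_i} - (\text{old interpolated height})$ gives $|\zeta''_i - \zeta'_i| \le \alpha_z$; in particular $\zeta''_i \ge \zeta'_i - \alpha_z \ge \delta_-^2 - \alpha_z > 0$, using $\zeta'_i = |\llbracket A_i'\rrbracket\,\llbracket B_i'\rrbracket| \ge \delta_-^2$ (every \fw is at least $1$, Lemma~\ref{lem:balance}) and $\alpha_z$ small.

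Next I would use Lemmas~\ref{lem:int} and~\ref{lem:ext} to express the stresses of this lifting as a linear functional of the $\zeta''_k$. For a boundary ridge only negative terms $-\zeta''_k/|\llbracket S_k'\rrbracket|$ occur and at least one is present (the first stacking contributes one), so the stress stays negative because every $\zeta''_k>0$. For an interior ridge $f_X$ created when $\p_i$ was stacked, the stress equals $L(\zeta) := \zeta_i\,|\llbracket D_i'\rrbracket| / (|\llbracket S'\rrbracket|\,|\llbracket T'\rrbracket|) - \sum_{k\in K_X} \zeta_k/|\llbracket S_k'\rrbracket|$ evaluated at $\zeta''$, where $K_X$ indexes the later stackings inside faces containing $f_X$; this is the very functional that Lemma~\ref{lem:stressperturbation} evaluated at $\zeta'$ to get $L(\zeta') \ge 4/5$. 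The sum of the absolute values of its coefficients is $|\llbracket D_i'\rrbracket|/(|\llbracket S'\rrbracket|\,|\llbracket T'\rrbracket|) + \sum_{k\in K_X} 1/|\llbracket S_k'\rrbracket| \le \delta_+ R/\delta_-^2 + n/\delta_- \le 2R/\delta_-^2$, using $|\llbracket D_i'\rrbracket| \le \delta_+ R$ (Corollary~\ref{cor:perturbation}), $|\llbracket S'\rrbracket|, |\llbracket T'\rrbracket|, |\llbracket S_k'\rrbracket| \ge \delta_-$, $|K_X| \le n$, $n \le R$ (since $R = n^{\log(2d)} \ge n$), and $\delta_+ + \delta_- = 2$. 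Hence $|L(\zeta'') - L(\zeta')| \le \alpha_z \cdot 2R/\delta_-^2 = 2/(3\delta_-^2)$, which is strictly below $4/5$ for the $R$ occurring here (e.g.\ $R \ge 2$), so $L(\zeta'') > 0$ and Lemma~\ref{lem:stresssign} finishes the proof.

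The crux — and the place to be careful — is that final estimate: an interior ridge may accumulate as many as $n$ negative charges, so naively the total perturbation of its stress is only $O(n\alpha_z)$, not $O(\alpha_z)$, and one has to lean on $R \ge n$ together with the bound $|\llbracket D_i'\rrbracket| \le R$ on the charged volumes to push the change below the available $4/5$ margin. The ``re-read as a lifting'' trick is what keeps this manageable: it replaces the awkward task of bounding the change of a $(d+1)\times(d+1)$ determinant of rounded points (which would carry an extra factor of $d$) by the clean estimate $|\zeta''_i - \zeta'_i| \le \alpha_z$.
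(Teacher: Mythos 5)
Your proof is correct, and it takes a genuinely different route from the paper's. The paper's proof of this lemma works directly with the determinant $[\mathcal{T}']$: it expands along the $z$-coordinate row via Laplace, writes $\omega_X' = (\sum_i z'_i \llbracket A'_i\rrbracket)/(\llbracket S'\rrbracket\llbracket T'\rrbracket)$, splits the index set by sign, and then bounds the error term using the ``double-cover'' observation $\sum_i |\llbracket A'_i\rrbracket| \le 2\llbracket B'\rrbracket \le 2R$ (a sum over only the $d+1$ vertices of $S'\cup T'$). Your argument instead reinterprets the rounded $z$-coordinates as the lifting of the \emph{same} flat embedding by perturbed vertical shifts $\zeta''_i$ with $|\zeta''_i - \zeta'_i|<\alpha_z$, and then propagates the error through the stress-accumulation formulas of Lemmas~\ref{lem:int} and~\ref{lem:ext}, treating the stress of each ridge as a fixed linear functional $L$ of the shift vector. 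This buys you a determinant-free argument and a one-line treatment of the boundary ridges (each $\zeta''_k>0$ forces $\omega_X<0$, versus the paper's detour through showing every rounded height stays positive), at the cost of a sum over up to $n$ decrement coefficients that you must tame with the observation $n\le R$ --- a step the paper sidesteps because its sum has only $d+1$ terms controlled geometrically by the double cover. Both routes arrive at comparable bounds on the stress change ($2/(3\delta_-^2)$), and your ``re-read as a lifting'' device is a clean way to isolate exactly what the $z$-rounding can do.

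One small point worth making explicit if you wrote this up: you should verify that the shifts $\zeta''_i$ reproduce the rounded heights in \emph{stacking order} (you say this, but it hinges on $\pi(\p_i)$ lying in the interior of the perturbed $\pi(f_{D_i})$ so that $h_i$ is an affine --- in fact convex --- combination of the rounded heights of $D_i$'s vertices; Corollary~\ref{cor:perturbation} guaranteeing no projected-volume sign flips is indeed what makes this hold). You handled this correctly, but it is the load-bearing step that makes the reinterpretation legitimate.
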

\begin{proof}
Let $z_i'$ be the coordinate $z_i$ after rounding and, similarly, let $\omega'_X$ be the stress after rounding the $z$-coordinates.
We have that $0 \le z_i-z_i'\le \alpha_z$. By Lemma~\ref{lem:stressperturbation} the 
interior ridges in the perturbed flat embedding have a stress $\omega_X$ that is at least $4/5$. Let $f_X$ be any interior ridge 
that is incident to the facets $f_S$ and $f_T$. By Equation~\eqref{eq:altstress} we can 
express $\omega'_X$ in terms of  $[\mathcal{T'}],\llbracket S' \rrbracket$, and $\llbracket T' \rrbracket$, where 
$\mathcal{T'}$ is formed by the point set $S'\cup T'$. The value of $[\mathcal{T}']$ can be expressed as $\sum_{i\in I} z'_i \llbracket A'_i \rrbracket$, where $I$ is the index set of the 
vertices of $\mathcal{T}'$ and $A'_i$ is given by $\mathcal{T}'\setminus\{\p'_i\}$ in some appropriate order (Laplace expansion). We split the set $I$ into $I_+:=\{i\in I \mid \llbracket A_i \rrbracket \ge 0\}$ and $I_-:=I\setminus I_+$.
As usual $f_B$ denotes the boundary face. 
Note that the projections of the simplices spanned by the sets  $A'_i$ double-cover the projection of 
$\mathcal{T}'$ into the $z=0$ hyperplane, and therefore $ \sum_{i\in I} | \llbracket A'_i \rrbracket |\le 2\llbracket B' \rrbracket \le 2\llbracket B \rrbracket \le 2R$.  Moreover, by  Corollary~\ref{cor:perturbation} and our choice of $\alpha$, 
we have $|\llbracket S'\rrbracket \llbracket T'\rrbracket |\ge (1-1/(10R))^2$. The stress on an interior ridge $f_X$  after rounding ($\omega_X'$) can be bounded as follows
\begin{align*}  \omega_X' = \frac{ [\mathcal{T}']}{\llbracket S' \rrbracket \llbracket T' \rrbracket}
& = \frac{\sum_{i\in I} z'_i \llbracket A'_i \rrbracket}{\llbracket S' \rrbracket \llbracket T' \rrbracket} 
= \frac{\sum_{i\in I_+} z'_i |\llbracket A'_i \rrbracket|}{\llbracket S' \rrbracket \llbracket T' \rrbracket} 
-\frac{\sum_{i\in I_-} z'_i |\llbracket A'_i \rrbracket|}{\llbracket S' \rrbracket \llbracket T' \rrbracket} 
\\
& \ge  \frac{\sum_{i\in I_+} (z_i-\alpha_z) |\llbracket A'_i \rrbracket|}{\llbracket S' \rrbracket \llbracket T' \rrbracket}
-\frac{\sum_{i\in I_-} z_i |\llbracket A'_i \rrbracket|}{\llbracket S' \rrbracket \llbracket T' \rrbracket} 
 \\
 & =  \underbrace{\frac{\sum_{i\in I} z_i \llbracket A'_i \rrbracket}{\llbracket S' \rrbracket \llbracket T' \rrbracket}}_{t_1}  -  
\underbrace{\frac{\sum_{i\in I_+} \alpha_z |\llbracket A'_i \rrbracket|}{\llbracket S' \rrbracket \llbracket T' \rrbracket}}_{t_2}
%
%&\ge \omega_X -  \frac{ \sum_{i\in I}  \llbracket A'_i \rrbracket}{(1-1/(10R))^2 \perz} \\
% &\ge \frac{4}{5} - \frac{2R}{3 (1-1/(10R))^2 R}   \\
% & = \frac{4}{5} - \frac{200 R^2}{3(10R-1)^2}
 \end{align*}
We observe that the $ \sum_{i\in I} z_i \llbracket A'_i \rrbracket= [ \mathcal{T}'_z ]$, where 
$\mathcal{T}'_z$ denotes $\mathcal{T}$ after rounding in the flat embedding, but before rounding the $z$-coordinates. This  
gives us $t_1 \ge [\mathcal{T}'_z]/(\llbracket S' \rrbracket \llbracket T' \rrbracket) = \omega_X \ge 4/5$. Moreover,
as already noticed, $ \sum_{i\in I_+}  |\llbracket A'_i \rrbracket| \le \sum_{i\in I}  |\llbracket A'_i \rrbracket| \le 2R$. After plugging in our choice for $\alpha_z$ and 
our bound for $\llbracket S'\rrbracket \llbracket T'\rrbracket$ we obtain $t_2 \le 2R /(3R (1-1/(10R))^2)$. This yields
\begin{align*}
\omega_X' \ge t_1 - t_2 \ge \frac{4}{5} - \frac{2R}{3 (1-1/(10R))^2 R} = \frac{4}{5} - \frac{200 R^2}{3(10R-1)^2}.
 \end{align*}
Note that the last expression is a monotone increasing function which is positive for $R\ge 3$. 

We are left with checking the sign for the stresses on the ridges that define the boundary of $f_B$. The stresses for these faces have to remain negative. Note that this is certainly the case, if all $z$-coordinates after the rounding remain positive. Before rounding the $z$-coordinates, every $z$-coordinate of a vertex not on $f_B$ was at least as large as the smallest vertical shift $\zeta'_i$. Since the vertical shifts are defined as the sum of two \fws, we have that the nonzero $z$-coordinates are at least $(1-1/(10R))^2$. As observed earlier, rounding the $z$-coordinates decreases the $z$-coordinates by at most $1/(3R)$. Therefore we have for every $\p_i\not \in B$
$$ z'_i\ge \left( 1- \frac{1}{10R}\right)^2- \frac{1}{3R}> 1- \frac{1}{5R} - \frac{1}{3R} >0,$$
for every $R\ge 3$. 
Hence, after rounding the $z$-coordinates the sign pattern of the stresses verifies the convexity of the perturbed realization.
\end{proof}

We now summarize our analysis and state the main theorem.
\setcounter{thm}{1}
\addtocounter{thm}{-1}
\mainthm

\begin{proof}
To get integer coordinates we multiply all coordinates after the rounding with $1/\alpha$, except 
the $z$-coordinates, which we multiply with $1/\alpha_z$. Since the maximal $z$-coordinate
is by Lemma~\ref{lem:zmax} at most $2R^2$, and we scale by $1 / \alpha_z= 3 R$ the bound for the $z$-coordinates in the 
theorem follows. All other coordinates are positive and smaller then $L$ before rounding. Hence by
scaling with $1 /\alpha =  10 d^2  L^{d-2} R$ we get that all coordinates are integers 
(Lemma~\ref{lem:stressperturbation}) and the maximum coordinate has size
$10 d^2 L^{d-1} R$. Plugging in the definition of $L$ gives as upper bound $10 d^2 R^2$
as asserted.
\end{proof}

By expressing the quantity $R$ in terms of $n$ we can restate Theorem~\ref{thm:main} as the following corollary.
\begin{cor}\label{cor:main}
For a fixed $d$, every $d$-dimensional stacked polytope can be realized on an integer grid polynomial
in $n$. The size of the largest $z$-coordinate is bounded by $O(n^{3\log(2d)})$, all other coordinates 
are bounded by $O(n^{2\log(2d)})$.
\end{cor}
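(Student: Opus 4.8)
The plan is to obtain Corollary~\ref{cor:main} as an immediate specialization of Theorem~\ref{thm:main}, using the explicit bound on $R$ from Lemma~\ref{lem:balance}. Recall that $R$ is the \fw assigned to the root of $\mathcal{T}(P)$, and that Lemma~\ref{lem:balance} guarantees $R \le n^{\log 2d}$, where $n$ is the number of vertices of the stacked polytope. Theorem~\ref{thm:main} then places the realization on an integer grid in which every coordinate except along one distinguished axis has size at most $10 d^2 R^2 \le 10 d^2 n^{2\log(2d)}$, while the coordinates along that axis have size at most $6 R^3 \le 6 n^{3\log(2d)}$.

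First I would fix the dimension $d$, as the corollary does; this turns $10 d^2$, $6$, $2\log(2d)$, and $3\log(2d)$ into constants. Then the two bounds become $O(n^{2\log(2d)})$ and $O(n^{3\log(2d)})$, and since the exponents are independent of $n$, both are polynomial in $n$ --- which is exactly the assertion. As a sanity check, for $d=3$ one has $2\log 6 \approx 5.17$ and $3\log 6 \approx 7.76$, with $10\cdot 3^2 = 90$, recovering the $90 n^{5.17}\times 90 n^{5.17}\times 6 n^{7.76}$ grid mentioned after Theorem~\ref{thm:main}. The realization itself is output by the three-step algorithm of Section~\ref{sec:algo} --- assign balanced \fws (Lemma~\ref{lem:balance}), assign the vertical shifts $\zeta_i$ and perturb them to $\zeta_i'$, and round the flat coordinates and the $z$-coordinates as in Lemmas~\ref{lem:stressperturbation} and~\ref{lem:zperturbation} --- so the existence claim is constructive.

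There is no real obstacle here; the content is carried entirely by Theorem~\ref{thm:main} and Lemma~\ref{lem:balance}. The only two points worth a sentence of care are: (i) checking that the symbol $R$ in Theorem~\ref{thm:main} is literally the root \fw bounded by Lemma~\ref{lem:balance}, so that substituting $R \le n^{\log 2d}$ is legitimate; and (ii) noting that absorbing the factors $10d^2$ and $6$ into $O(\cdot)$ is harmless precisely because $d$ is held fixed, which is the hypothesis of the corollary. If instead one wanted the dependence on $d$ displayed explicitly, the proof of Theorem~\ref{thm:main} already furnishes it and no restatement would be needed.
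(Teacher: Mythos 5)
Your proposal is correct and matches the paper exactly: the paper offers no separate proof of Corollary~\ref{cor:main} beyond the one-line remark that it follows by substituting the value of $R$ into Theorem~\ref{thm:main}, which is precisely the substitution you carry out. The only minor clarification is that in the statement of Theorem~\ref{thm:main} the symbol $R$ is already set to $n^{\log(2d)}$ (the upper bound from Lemma~\ref{lem:balance}), so no further checking of the identification of $R$ is actually required.
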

Table~\ref{tab:1} lists the induced grid bounds for $d=3,\ldots, 10$.

\begin{table}[htp]
\begin{center}
\begin{tabular}{c c c }
\toprule
$d$ & exponent largest non-$z$-coordinate & exponent largest $z$-coordinate \\
\midrule
 3 & 5.17 & 7.76 \\
 4 & 6 & 9 \\
 5 & 6.65 & 9.97 \\
 6 & 7.17 & 10.76  \\
 7 & 7.62 & 11.43 \\
 8 & 8 & 12 \\
 9 & 8.34 & 12.51\\
 10 & 8.65 & 12.97 \\
 \bottomrule
\end{tabular}
\end{center}
\caption{The induced grid bounds in terms of $n$ up to dimension 10.}
\label{tab:1}
\end{table}%

\section{A simple lower bound}
\label{sec:lower}
In this section we present a simple lower bound. The basis of our construction is the following graph. Take the tetrahedron and stack a vertex in every face, then take the resulting graph and stack again a vertex in every face. We call this graph $B_3$; see Fig.~\ref{fig:b3} for an illustration.

\begin{figure}[htp]
\centering
\label{fig:b3}
	\includegraphics[scale=.6]{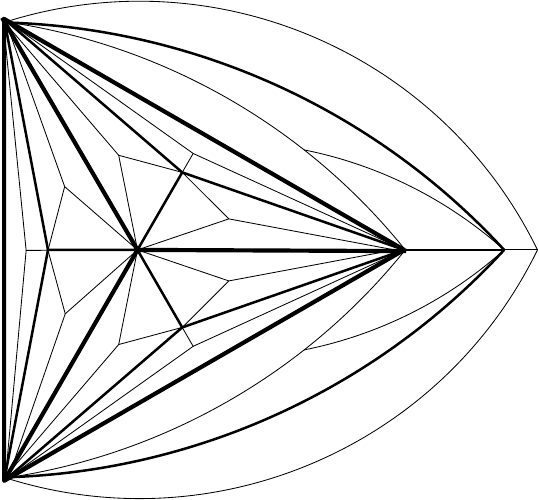}
\caption{The graph $B_3$. It has 36 faces and 20 vertices, 12 of them having degree~3.}
\end{figure}

\begin{lem}\label{lem:interiortriangle}
Let $P$ be any embedding of $B_3$ as a stacked 3-polytope. Then there is at least one face with no boundary edge in the orthogonal projection of $P$ into the $xy$-plane.	
\end{lem}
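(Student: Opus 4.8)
The plan is to argue by contradiction from the combinatorial structure of $B_3$. Suppose that in some realization $P$ of $B_3$ as a stacked $3$-polytope \emph{every} face has an edge projecting onto the boundary of $Q:=\pi(P)$, the orthogonal projection to the $xy$-plane. I will show that this forces too many vertices onto the silhouette of $P$.

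First I would set up the silhouette. The polygon $\partial Q$ is covered by the images of a set of edges of $P$: walking around $\partial Q$ and lifting each of its edges by a vertical supporting plane of $P$, one obtains, for each edge of $Q$, an edge of $P$ projecting onto it, and consecutive such \emph{contour edges} share the vertex of $P$ projecting to the corner of $Q$ between them. Hence the contour edges form a simple cycle $C$ in the graph of $P$. Moreover a face $f$ of $P$ has an edge projecting into $\partial Q$ if and only if $f$ contains a contour edge: if an edge of the triangle (or segment) $\pi(f)$ lies on $\partial Q$, it is the projection of an edge of $f$. I would remark that possible vertical faces of $P$ cause no trouble, since a vertical triangular face still contains a contour edge (the one spanning its projected segment).

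Now comes the counting step. Every edge of a polytope lies on exactly two faces, so $\sum_{f}\#\{\text{contour edges of }f\}=2|C|$, where the sum is over all faces $f$ of $P$. By the contradiction hypothesis the left-hand side is at least the number $F=36$ of faces of $B_3$, so $|C|\ge 18$. Since $B_3$ has only $20$ vertices, at most two vertices of $B_3$ fail to lie on $C$. Next I would use the $12$ degree-$3$ vertices of $B_3$: these are exactly the vertices inserted in the last stacking round, and each is adjacent only to the $8$ vertices (tetrahedron vertices and first-round stacked vertices) created earlier; call this $8$-element set $U$. At least $12-2=10$ degree-$3$ vertices lie on $C$, and for such a vertex $x$ both of its $C$-neighbours are among the three graph-neighbours of $x$, hence in $U$. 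In particular no two degree-$3$ vertices are consecutive on $C$, so reading $C$ cyclically these $\ge 10$ degree-$3$ vertices split $C$ into $\ge 10$ arcs, each consisting of at least one vertex of $U$; these $U$-vertices are distinct for distinct arcs, giving at least $10$ vertices of $U$ on $C$. This contradicts $|U|=8$, so some face of $P$ has no edge on $\partial Q$.

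I expect the only delicate point to be the first step: checking cleanly that the contour edges form a cycle and that ``having a boundary edge in the projection'' is equivalent to ``containing a contour edge,'' including the degenerate cases (vertical faces, or a vertex projecting to the relative interior of an edge of $Q$). Once that is in place, the combinatorial heart of the argument — the double count giving $|C|\ge 18$ and the incidence restriction coming from the degree-$3$ vertices of $B_3$ — is short and purely combinatorial.
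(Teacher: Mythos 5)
Your argument is correct, and it reaches the same intermediate threshold as the paper but then diverges. Both proofs establish that the silhouette cycle must contain at least $18$ vertices: the paper phrases this as Case~1 (if the boundary $\sigma$ of $\pi(P)$ has fewer than $18$ vertices, then at most $2\cdot 17=34<36$ faces can have a boundary edge), while you obtain it by the double count $\sum_f \#\{\text{contour edges of }f\}=2|C|\ge 36$ under the contradiction hypothesis. From that point on the arguments genuinely differ. The paper observes that each degree-$3$ vertex on $\sigma$ is the tip of an ear of one of the two triangulations cut out by $\sigma$; with at least ten such ear tips, one of the two triangulations has at least three ears, so its dual tree has at least three leaves and therefore a degree-$3$ node, which corresponds to a triangle with all three edges interior. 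You instead exploit a specific structural feature of $B_3$: the twelve degree-$3$ vertices form an independent set whose neighborhood is exactly the set $U$ of eight older vertices. Since at least ten degree-$3$ vertices lie on $C$ and no two can be consecutive there, they break $C$ into at least ten disjoint arcs, each forced to contain a distinct vertex of $U$, contradicting $|U|=8$. Your version is arguably more self-contained (it avoids the dual-tree argument and handles the two cases in one stroke), at the cost of being tied to the particular graph $B_3$; the paper's ear/dual-tree argument would adapt to other planar $3$-trees with enough degree-$3$ vertices without needing to know their neighborhoods. One caveat applies to both proofs: each implicitly assumes the silhouette is a simple cycle in the $1$-skeleton, which can fail if $P$ has a vertical face or a vertical edge. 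You flagged this; the paper does not address it either, so your write-up is no less rigorous on this point, but a fully careful version of either proof would need to either rule out the degenerate configurations (e.g.\ by perturbing the polytope or the projection direction, and arguing the conclusion is preserved) or handle them directly.
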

\begin{proof}
Let $\sigma$ be the boundary of $\pi(P)$, which is a polygon. We proceed with a case distinction on the size of $\sigma$.\\
{\bf Case 1:} $\sigma$ contains less than 18 vertices. \\
Every edge on $\sigma$  is a boundary edge of two faces.
Hence in this case we can have at most 34 faces with a boundary edge, but there are 36 faces in $B_3$. So we have at least two faces without a boundary edge in the projection. \\
{\bf Case 2:} $\sigma$ contains at least 18 vertices. \\
The cycle $\sigma$ splits $B_3$
into two triangulations (interior+cycle and exterior+cycle). 
Let us now have a look what happens at a degree~3 vertex of $B_3$ on $\sigma$. 
Two of its incident edges have to be in $\sigma$, 
which means it is the ``tip'' of an ear in one of the triangulations. 
Since we have 12 degree~3 vertices and at most two vertices are not on $\sigma$, one of the triangulations, say $Q$, has more than two ears. Now we look at the dual graph of $Q$ in which we removed the vertex for the outer face. Clearly, this graph is connected (since $B_3$ is 3-connected) and it has as many degree~1 vertices as $Q$ has ears. Hence, there has to be a vertex in the dual 
graph with degree~3. This means that there is a face in $Q$, whose adjacent faces are interior faces in $Q$. Therefore, we have a face in the projection without a boundary edge as asserted.
\end{proof}

There are planar 3-trees that require  $\Omega(n^2)$ area for plane straight-line drawings~\cite{MNRA11}.
Let $G_m$ be such a planar 3-tree with $m$ vertices that needs $\Omega(m^2)$ area. For simplicity we assume that $n$ is a multiple of 36. We glue a copy of $G_{n/36}$ in each of the 36 faces of $B_3$. This yields another planar 3-tree with $n$ vertices, which we call $\Gamma_n$.

\begin{lem}
The embedding of $\Gamma_n$ as a convex 3-polytope requires a bounding box of $\Omega(n^3)$ volume.
\end{lem}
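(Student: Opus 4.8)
The plan is to push the two‑dimensional area bound up to three dimensions by extracting a copy of $B_3$ from the realization, using Lemma~\ref{lem:interiortriangle} to pin down an ``interior'' facet in each of the three coordinate projections, and reading off a large shadow in each direction.

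First I would extract a $B_3$‑realization. Let $P$ be a convex $3$‑polytope realizing $\Gamma_n$ on the integer grid, let $V_0$ be the $20$ vertices of $P$ belonging to the $B_3$‑part, and set $P_0=\operatorname{conv}(V_0)$. For every facet $g$ of $B_3$ a copy of $G_{n/36}$ is stacked on $g$ inside $P$; at the moment that stacking begins, $g$ is a facet of a convex intermediate polytope $P''$, so every vertex of $P''$ — in particular all of $V_0$, since the $B_3$‑part is built first — lies weakly on one side of the hyperplane $H_g$ spanned by $g$, while the three vertices of $g$ lie on $H_g$. Hence $\Delta_g$ is a facet of $P_0$ (here I would record the harmless non‑degeneracy remark that the three vertices of $g$ are affinely independent in any valid realization, since otherwise no cap could be stacked on $g$). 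Since this holds for all $36$ triangles of $B_3$ and those triangles form a triangulated sphere, $P_0$ realizes $B_3$.

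Next, apply Lemma~\ref{lem:interiortriangle} three times, to the $xy$‑, $yz$‑ and $xz$‑projections of $P_0$ (its proof does not use anything special about the $xy$‑plane). From the $xy$‑case I get a facet $f$ of $P_0$ no edge of which lies on the boundary of $\pi(P_0)$, so $\pi(\Delta_f)$ lies in the interior of $\pi(P_0)\subseteq\pi(P)$; likewise for the other two planes. Let $C$ be the copy of $G_{n/36}$ stacked on $f$ inside $P$: it is a stacked $3$‑polytope with base $\Delta_f$, lying to one side of $H_f$, and $G_{n/36}$ is drawn on its ``dome'' $\partial C\setminus\Delta_f$ with straight edges and grid‑point vertices. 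The heart of the proof — and the step I expect to be the main obstacle — is to show that orthogonally projecting this drawing to the $xy$‑plane is crossing‑free, i.e.\ that the dome is a terrain over its shadow. This is delicate because a stacked cap need not be ``shallow'' over its own base; one has to use that $\Delta_f$ is \emph{interior} in the projection, so that the $B_3$‑part together with the other $35$ caps surrounds $C$ in the $xy$‑shadow and forces the dome to sit on a single hull of $P$ and hence not to fold over. Granting this, the projection is a plane straight‑line drawing of $G_{n/36}$ on $\mathbb{Z}^2$, so by the $\Omega(m^2)$ lower bound of~\cite{MNRA11} its bounding box — which is contained in $\pi(P)$ — has area $\Omega((n/36)^2)=\Omega(n^2)$.

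Running the same argument in the $yz$‑ and $xz$‑directions, all three axis‑parallel shadows of $P$ have area $\Omega(n^2)$. If the bounding box of $P$ has side lengths $a,b,c$, then each shadow area is bounded by the corresponding product, so $ab,\ bc,\ ca$ are all $\Omega(n^2)$; multiplying the three inequalities gives $(abc)^2=\Omega(n^6)$, hence the bounding box has volume $abc=\Omega(n^3)$, as claimed. If the crossing‑freeness step resists a clean proof, an alternative is to keep only one shadow bound of $\Omega(n^2)$ and combine it with a direct argument that a simplicial $3$‑polytope with $n$ grid vertices must have $z$‑extent $\Omega(n)$, but I would first attempt the symmetric three‑projection version above.
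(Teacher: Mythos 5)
Your plan matches the paper's strategy at every structural step: extract a realization of $B_3$ from the $\Gamma_n$-realization, invoke Lemma~\ref{lem:interiortriangle} in each of the three coordinate directions, conclude $\Omega(n^2)$ area for each axis-parallel shadow via the planar lower bound of \cite{MNRA11}, and multiply the three inequalities to get $\Omega(n^3)$ volume. That final multiplicative trick is exactly what the paper does.

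However, you correctly flag the one step that actually carries the proof — that the orthogonal projection of the cap stacked over the interior face $f$ is a crossing-free drawing of $G_{n/36}$ — and then you leave it open (``Granting this, \ldots''). The informal reason you offer, that the other $35$ caps ``surround'' the cap in the shadow and prevent it from folding over, is not the right mechanism and would be hard to make rigorous: upper and lower parts of a convex boundary routinely overlap in projection, so ``being surrounded'' alone does not prevent folding. The paper closes the gap with a local cone argument that does not involve the other caps at all. Let $C$ be the cone bounded by the three supporting planes of the facets of $P_B$ adjacent to $f$, with apex $a$ beyond $f$'s plane. Because $f$ has no edge on the silhouette, all three neighbors lie on the same side (say upper) of $P_B$, so on $\pi(\Delta_f)$ each neighbor plane lies above the plane of $f$; writing $w_i$ for the height difference, $w_i > 0$ exactly on the $\pi(\Delta_f)$ side of the $i$-th edge line, and at the apex all $w_i$ are equal, hence all positive (a point in the plane must lie on the interior side of at least one edge line of a triangle). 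Thus $\pi(a) \in \pi(\Delta_f)$, the truncated cone beyond $f$'s plane projects into $\pi(\Delta_f)$, so the cap lies over the interior of $\pi(P)$; being a connected piece of $\partial P$ that never meets the silhouette, it sits entirely on one sheet and projects injectively. This is the content you would need to supply to turn your outline into a proof; your fallback ($z$-extent $\Omega(n)$ for a simplicial polytope on the grid) is not obviously true and is not used by the paper.
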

\begin{proof}
Let $P$ be an embedding of $\Gamma_n$ as a 3-polytope. By restricting the 1-skeleton of $P$ to $B_3$ and taking the convex hull, we get an embedding of $B_3$ as a 3-polytope $P_B$. Due to Lemma~\ref{lem:interiortriangle}, there has to be one face, say $f$, which has no boundary edge on $\pi(P_B)$. The face $f$ defines in $\pi(P)$ a triangle, which contains the graph $G'=G_{n/36}$. The supporting planes of the faces adjacent to $f$ in $P_B$ define a cone $C$. We denote with $C_0$ the part $C\setminus P$ that contains the apex of $C$. Since none of the edges of $f$ are on the boundary of $\pi(P_B)$ we have that $\pi(C_0)=\pi(f)$. Hence, $\pi(P)$ contains a noncrossing drawing of $G_{n/36}$, namely $\pi(G')$, inside $\pi(f)$. Therefore, $\pi(P)$ needs at least area $\Omega(n^2)$.

There was nothing special with choosing the projection into the $xy$-plane. By the same arguments we can show, that also the projections into the $xz$- and $yz$-plane require $\Omega(n^2)$ area.

Let $d_x$, $d_y$, $d_z$ denote the dimensions of the bounding box of $P$ along the $x$-, $y$-, and $z$-axis. The volume of the bounding box can be estimated by 
$$d_xd_yd_z = \sqrt{d_x^2 d_y^2 d_z^2}= \sqrt{(d_xd_y)(d_xd_z)(d_yd_z)}= \Omega(n^3).$$
\end{proof}

\paragraph{Acknowledgments.} \hskip1ex \\  We thank Alexander Igamberdiev and an anonymous reviewer for helpful comments.

\bibliographystyle{alpha}
\bibliography{stacked}
%\vspace*{\fill}

\end{document}